\newcommand{\bigO}{\mathcal{O}}
\newcommand{\Oh}[1]{\bigO\left( #1 \right)}
\newcommand{\Ohn}{\bigO(n)}
\newcommand{\ZP}[1]{Z^{P}_{#1}}
\newcommand{\lc}{\operatorname{lc}}
\newcommand{\rc}{\operatorname{rc}}
\newcommand{\wit}[1]{\widetilde{#1}}
\newcommand{\man}{\mathsf{A}}
\newcommand{\fp}{\mathsf{F}}
\newcommand{\eqf}{G_{=}(\fp)}
\spnewtheorem{observation}[lemma]{Observation}{\bfseries}{\itshape}
\crefname{observation}{Observation}{Observation}
\begin{document}
\title{Combinatorics of Palindromes}
%
%
\author{Michael Itzhaki\inst{1}\orcidID{0009-0009-4783-2537}}
\authorrunning{M. Itzhaki}
%
\institute{Bar-Ilan University
\email{michaelitzhaki@gmail.com}}
\maketitle              
%






\bibliographystyle{conference/splncs04}

\begin{abstract}
We investigate the structure and reconstruction complexity of Manacher arrays. First, we establish a combinatorial lower bound, proving that the number of rooted tandem repeat trees with $n+1$ genes exceeds the number of distinct Manacher arrays of length $n$. Second, we introduce a graph-theoretic framework that associates a graph to each Manacher array, where every proper vertex coloring yields a string consistent with the array. Finally, we analyze a reconstruction algorithm by I et al. (SPIRE 2010), showing that it simultaneously achieves a globally minimal alphabet size, uses at most $\log_2(n{-}1) + 2$ distinct symbols, and can be adapted to produce reconstructions over arbitrary alphabets when possible. Our results also resolve an open problem posed by the original authors. Together, these findings advance the combinatorial understanding of Manacher arrays and open new directions for string reconstruction under structural constraints.

\keywords{Palindrome  \and Combinatorics \and Manacher Array \and Reconstruction}

\end{abstract}

\section{Introduction}\label{s:intro}

Palindromes—strings that read identically forward and backward—arise naturally in fields such as mathematics, computer science, and bioinformatics. Their study sheds light on structural symmetry in strings and has implications for sequence alignment, data compression, and formal language theory~\cite{KMP-77,4567908, Lenz2019,gk:97,f:03,lssnz:05,sr:12}. In computational settings, palindromes are central to problems ranging from error detection in coding theory to modeling self-replicating biological patterns. A classical linear-time algorithm for computing all maximal palindromes in a string is due to Manacher~\cite{man:75}, with further variants addressing the longest palindromic substring~\cite{10.1007/BF01182773,10.1145/270563.571472}, palindromic length~\cite{borozdin_et_al:LIPIcs.CPM.2017.23,10.1007/978-3-319-07566-2_16}, efficient indexing~\cite{10.1007/978-3-319-29516-9_27}, and sublinear-time detection~\cite{charalampopoulos_et_al:LIPIcs.CPM.2022.20}.


{\em Tandem Duplication Trees} provide a combinatorial model of genome evolution through repeated duplications of adjacent segments~\cite{b:99}; see also~\cite{EICHLER2001661,10.1093/molbev/msh115}. We uncover a structural connection between Manacher arrays and rooted tandem duplication trees.

Relations between strings and graphs have been explored before~\cite{10.1007/978-3-540-45138-9_15,GKRR:20}; we extend this programme by attaching a graph to every Manacher array and analyzing its vertex colorings. This perspective yields a reconstruction framework for Manacher arrays based solely on the vertex coloring of the graph.

In this work, we establish a combinatorial link between Manacher arrays and duplication trees, introduce a graph-based reconstruction framework via vertex colorings, and prove that a classical algorithm achieves minimal alphabet size. Our results resolve an open problem and deepen the combinatorial understanding of palindromic structures and Manacher arrays.


\paragraph{Our contributions.}
%
%
%
%


\begin{itemize}
    \item \textbf{Combinatorial Lower Bound.} We prove a surprising combinatorial bound relating Manacher arrays and rooted tandem repeat trees (Theorem~\ref{thm:tandem-trees}): the number of rooted tandem repeat trees with $n+1$ genes is greater than or equal to the number of distinct Manacher arrays of length $2n-1$. The combinatorial problem of counting the number of unique Manacher arrays is motivated by the literature~\cite{tomohiro:10}.
    
    \item \textbf{Graph-Theoretic Reconstruction.} We introduce a novel graph construction for Manacher arrays (Theorem~\ref{thm:graph-reconstruction}), where:
    \begin{itemize}
        \item Every Manacher array can be represented by a graph.
        \item Every proper coloring of the graph yields a unique string corresponding to the given Manacher array.
    \end{itemize}

    \item \textbf{Minimal Reconstruction Algorithm.} We analyze the algorithm of I et al.~\cite{tomohiro:10} that reconstructs a string from its Manacher array and prove the following (Theorem~\ref{thm:minimal-reconstruction}):
    \begin{itemize}
        \item It outputs a string with the globally minimal alphabet size, achieving the given Manacher array.
        \item The number of distinct symbols used is at most $\log(n-1) + 2$, and there exists a tight example for every $n$.
        \item The algorithm can be modified to generate a string with any desired alphabet size (when such a string exists).
    \end{itemize}
    
\end{itemize}


\section{Preliminaries}\label{s:pre}

An ordered sequence of characters is a {\em string}. A string $S$ of length $|S|=n$ is the sequence $S[1]S[2]\dots S[n]$ where $\forall 1\le i\le n,\;S[i]\in \Sigma$. The set $\Sigma$ is called the {\em Alphabet} of $S$. The {\em empty string} with $|S|=0$ is denoted as $\varepsilon$. A {\em substring} of a string is the string $S[i..j]$ where $1 \le i \le j \leq n$ and it is formed from the characters of $S$ starting at index $i$ and ending at index $j$, i.e., $S[i..j] = S[i]S[i+1]\dots S[j]$. If $j < i$, $S[i..j]$ is the empty string $\varepsilon$. A prefix of $S$ is a substring $S[1..i]$, and a suffix of $S$ is a substring $S[i..n]$. We say that a character $c$ {\em occurs} in $S$ if and only if $\exists i$ s.t. $S[i] = c$, and denote $c\in S$.

For an integer $i \in \mathbb{N}$, we denote as $S^i$ the concatenation of a string to itself $i$ times, i.e., $S^1 = S$, and $S^i = S^{i-1} \cdot S$, for any $i\geq 2$. A string $S=p^ip'$ is called {\em periodic}, where $i \ge 2$ and $p'$ is a prefix of $p$. The substring $p$ is called the {\em period} or {\em factor} of $S$. A string with two periods $p,\;q$ has a period of length $\gcd(|p|,|q|)$ (Fine \& Wilf~\cite{fw:65}).

We call $S$ a \textit{palindrome} or \textit{palindromic} if $\forall 1\le i\le n,\; S[i]=S[n-i+1]$. For example, $S=\texttt{level}$, or $S=\texttt{deed}$ are palindromes. However, $S=abcbaa$ is not a palindrome. A substring $P=S[i..j]$ is called a {\em maximal palindrome} of $S$ if $P$ is a palindrome, and $S[i-1..j+1]$ is either undefined or not a palindrome. The center of $P$ is $c_P = \frac{i+j}{2}$ and its radius is $r_P=\lceil\frac{j-i}{2}\rceil$. 


There are $2n-1$ possible centers in $S$ and each has exactly one corresponding maximal palindrome. 
We call a palindrome $p$ of length $\le 1$ a {\em trivial palindrome}. 
An array of all maximal palindromes can be found in linear time, e.g., using the Manacher algorithm~\cite{man:75}. The that retains the length of the maximal palindrome for every given center is known as the {\em Manacher array}, also referred to in the literature as the {\em palindromic structure} of $S$.

\begin{definition}[Manacher array]
Let $S[1 .. n]$ be a string of length $n$. The \emph{Manacher array} of $S$ is the array $\mathsf{A}[1 .. 2n-1]$ defined as follows:

Each position $i$ corresponds to a center between characters of $S$:
\begin{itemize}
    \item If $i$ is odd, $i = 2k-1$, then $\mathsf{A}[i]$ is the maximum integer $r \geq 0$ such that $S[k - r..k + r]$ is a palindrome, and $1 \le k - r \le  k + r \le n$.
    \item If $i$ is even, $i = 2k$, then $\mathsf{A}[i]$ is the maximum integer $r \geq 0$ such that $S[k - r + 1..k + r]$ is a palindrome, and $1 \le k - r + 1 \le  k + r \le n$.
\end{itemize}
\end{definition}

In other words, $\mathsf{A}[2k]$ is the radius of the longest palindrome centered between $S[k]$ and $S[k+1]$, and $\mathsf{A}[2k-1]$ is the radius of the longest palindrome centered at $S[k]$.

\begin{lemma}[Folklore,~\cite{10.1007/978-3-319-07566-2_16}]\label{p:pal-per}
Let $P$ be a palindrome. $p$ is a palindromic suffix of $P$ iff $|P|-|p|$ is a period of $P$.
\end{lemma}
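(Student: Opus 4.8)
The plan is to prove both directions of the biconditional by exploiting the defining symmetry of palindromes, namely that $P[i] = P[|P|-i+1]$ for every index $i$. Throughout, write $m = |P|$ and $d = |P| - |p|$, so that $p = P[d+1\,..\,m]$ is the suffix of $P$ of length $m-d$. Recall that $d$ being a period of $P$ means $P[i] = P[i+d]$ for all valid $i$ with $1 \le i \le m-d$.

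\medskip

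\noindent\textbf{Forward direction ($p$ palindromic suffix $\Rightarrow d$ is a period).}
First I would assume $p$ is itself a palindrome and aim to show $P[i] = P[i+d]$ for each $1 \le i \le m-d$. The idea is to chase an index through three reflections. Fix such an $i$. Since $P$ is a palindrome, $P[i] = P[m-i+1]$. Now $m-i+1$ lies in the range $[d+1, m]$ (because $1 \le i \le m-d$), so it is an index into the suffix $p$; concretely it is position $(m-i+1)-d = m-d-i+1$ of $p$. Since $p$ is a palindrome of length $m-d$, position $j$ of $p$ equals position $(m-d)-j+1$ of $p$, which lets me reflect within $p$ and rewrite $P[m-i+1]$ as $P[i+d]$ after translating back to indices of $P$. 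Carrying out this index bookkeeping gives $P[i] = P[i+d]$, establishing that $d$ is a period.

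\medskip

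\noindent\textbf{Reverse direction ($d$ is a period $\Rightarrow p$ palindromic suffix).}
Conversely, assume $d$ is a period and show $p = P[d+1\,..\,m]$ reads the same forwards and backwards. I would take two symmetric positions of $p$, say positions $k$ and $(m-d)-k+1$, and show the corresponding characters of $P$ agree. The same three-step chase works in reverse: use the period relation to shift an index by $d$, then use the palindromicity of $P$ to reflect across its center, landing on the mirror position within $p$. The periodicity hypothesis supplies exactly the $+d$ shift that the forward direction produced as a conclusion, so the two arguments are mirror images of one another.

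\medskip

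\noindent The computations here are entirely elementary, so there is no deep obstacle; the only real care is in the index arithmetic, making sure each reflected or shifted index stays within the legal range $[1,m]$ (equivalently $[d+1,m]$ for indices into $p$) so that every invocation of the palindrome property or the period property is valid. I expect the cleanest writeup to fix a single generic index and track it through the composition of "reflect in $P$" and "shift by the period," verifying that this composition coincides with "reflect in $p$"; once that single commuting-diagram of index maps is pinned down, both directions follow immediately and symmetrically.
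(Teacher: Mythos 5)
Your proof is correct. Note, though, that the paper itself gives no proof of this lemma: it is stated as folklore with a citation, so there is no internal argument to compare against. Your two-directional index chase is sound — the composition of ``reflect in $P$'' with ``reflect in $p$'' is exactly ``shift by $d$,'' and your range checks ($m-i+1 \in [d+1,m]$ when $1 \le i \le m-d$, etc.) all go through. For what it is worth, there is a slicker way to package the same content: writing $m=|P|$ and $d=m-|p|$, the suffix $p = P[d+1\,..\,m]$ satisfies $p^R = P^R[1\,..\,m-d] = P[1\,..\,m-d]$ because $P^R = P$; hence $p$ is a palindrome iff $P[d+1\,..\,m] = P[1\,..\,m-d]$, and equality of the length-$(m-d)$ suffix with the length-$(m-d)$ prefix is precisely the statement that $d$ is a period. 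This avoids all per-index bookkeeping by working with reversal as an operation on substrings. One small caution: your argument (correctly) uses the general definition of period, $P[i]=P[i+d]$ for all $1 \le i \le m-d$; the paper's preliminaries define ``periodic'' more restrictively via $S = q^i q'$ with $i \ge 2$, under which the ``only if'' direction would fail for palindromic suffixes shorter than half of $P$, so you are right not to adopt that reading here.
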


\begin{definition}\label{def:pal-dep}
    Let $S$ be a string. An index $m$ is said to be {\em palindromically dependent} iff there exists a palindromic substring $p=S[i..j]$ such that $c_p<m\le j$.
\end{definition}

The \textit{Zimin word} \( z_n \) is a recursively defined sequence of words, where:
\[
z_1 = w_1 \quad \text{and} \quad z_{n+1} = z_n w_{n+1} z_n,
\]
with each \( w_i \) being a distinct non-empty variable. The \textit{Zimin degree} of a word \( w \) is the largest number \( k \) such that \( w \) can be written as \( z_k \) (i.e., \( w \) is a substitution instance of \( z_k \)). A Zimin word of degree \( k \) has a length of at least \( 2^k - 1 \).

\begin{definition}[Palindromic Zimin Word]
Let $\ZP{1}$ be any arbitrary palindrome, and let $\ZP{k}$ be the pattern of all palindromes that match $\ZP{k-1} P_k \ZP{k-1}$ where $P_k$ is an arbitrary palindrome, and subsequent choices of $P_i$ satisfy $P_i\neq P_k$. A word that matches the structure $\ZP{k}$ is a {\em Palindromic Zimin Word}.
\end{definition}

Since every word that follows the structure $\ZP{k}$ also follows $\ZP{k-1}$, we say that $S$ follows (or matches) $\ZP{k}$ if it follows $\ZP{k}$ and does not follow $\ZP{k+1}$ \footnote{This pattern is also known as the ``ABACABA pattern''.}.


\paragraph{Tandem Duplication Trees.}
Let $T$ be a rooted binary tree and let $v \in T$ be a node.
We write $\operatorname{par}(v)$ for the parent of $v$, $\lc(v)$ for its left child, and $\rc(v)$ for its right child.
If $v$ is the root, then $\operatorname{par}(v)$ is undefined; if $v$ is a leaf, then $\lc(v)$ and $\rc(v)$ are undefined.

We now define the {\em Tandem Duplication Tree} and {\em Rooted Tandem Duplication Tree}, data structures motivated by biological evolutionary constructs.

\begin{definition}[Tandem-Duplication Event]
Let $\Gamma$ be a universe of gene identifiers, and let
$A = \{g_1,\dots,g_m\}$ be an ordered sequence with $g_j \in \Gamma$. \\
Pick any contiguous block of indices:
\[
B=\{i,i+1,\dots,i+\ell-1\},\quad(1\le i \le i+\ell -1 \le m)
\]
Replace \(A[i..i+\ell-1]\) by:
\[
  \{\,\lc(g_i),\dots ,\lc(g_{i+\ell-1}),\,\rc(g_i),\dots ,\rc(g_{i+\ell-1})\,\},
\]
where for every \(g_j\in B\) the symbols \(\lc(g_j)\) and \(\rc(g_j)\)
are \emph{distinct new} genes. Such an operation is referred to as {\em tandem-duplication event} or simply a {\em duplication event}.
A block $B$ of size \(k=\ell\) is called an \emph{\(\ell\)-duplication}.
\end{definition}

Using the latter definition, we define a {\em rooted duplication tree} as a tree that conserves all duplication events on a single gene.
  
        
\begin{definition}[Rooted Duplication Tree]
Let $A_0=\{1\}$ be the original gene, and let $A$ be the final genes array that results from consecutive tandem duplication events on the original array $A_0$.

The history of all events is stored in a rooted binary tree \(T\):
\begin{compactitem}
  \item the root represents the ancestral gene \(1\);
  \item each internal node corresponds to one duplication event, its left (and right) subtree containing all \(\lc(\,\cdot\,)\)
        (resp.\ \(\rc(\,\cdot\,)\)) descendants created by that event;
  \item leaves are the genes present in the final array.
\end{compactitem}
\end{definition}

An example can be found at~\cref{fig:rooted-tandem-repeat}. 
%
%
%
%
%
%

\begin{figure}
    \centering

\begin{tikzpicture}[
  scale=0.85,
  level distance=10mm, 
  every node/.style = {font=\small},
  treenode/.style = {circle, draw, inner sep=1pt},
  leaf/.style = {inner sep=1pt},
  dashedbox/.style={draw=black, dashed, inner sep=5pt, rounded corners}
]
\node[treenode]{r}
  child[xshift=-10mm] {node[treenode]{a}
    child[xshift=-10mm] {node[treenode]{b}
      child[xshift=-10mm] {node[treenode](d) {d}
        child[xshift=-10mm, yshift=3mm] {node[treenode](h) {h}
          child[yshift=-3mm] {node[leaf]{1}}
          child[yshift=2mm, xshift=2mm] {node[treenode]{q} 
            child[yshift=5mm, xshift=3mm] {node[leaf]{3}}
            child[yshift=5mm, xshift=-3mm] {node[leaf]{4}}
          }
        }
        child[xshift=5mm, yshift=-5mm] {node[treenode]{n}
            child[yshift=5mm, xshift=3mm] {node[leaf]{6}}
            child[yshift=5mm, xshift=-3mm] {node[leaf]{7}}
        }
      }
      child[xshift=-2mm] {node[treenode](f){f}
        child[xshift=-25mm,yshift=3mm] {node[treenode](i){i}
            child[xshift=-3mm, yshift=-3mm] {node[leaf]{2}}
            child[xshift=3mm, yshift=-3mm] {node[leaf]{5}}
        }
      child[xshift=-2mm] {node[treenode](j){j}
        child[xshift=2mm] {node[leaf]{8}}
        child[xshift=1mm] {node[leaf]{10}}
      }
      }
    }
    child[yshift=-20mm, xshift=-5mm] {node[treenode](k){k}
        child[xshift=-1mm] {node[leaf]{9}}
        child[xshift=-2mm] {node[leaf]{11}}
    }
  }
  child[xshift=+10mm] {node[treenode]{c}
    child[yshift=-20mm, xshift=-5mm] {node[treenode]{o}
        child[xshift=3mm] {node[leaf]{12}}
        child {node[leaf]{13}}
    }
    child[xshift=+10mm] {node[treenode]{g}
        child[xshift=+0mm] {node[treenode](l){l}
            child[xshift=+0mm] {node[treenode]{p}
                child[xshift=+5mm] {node[leaf]{14}}
                child {node[leaf]{15}}
            }
            child[yshift=-10mm, xshift=14mm] {node[leaf]{17}}
        }
        child[xshift=+10mm] {node[treenode](m){m}
            child[yshift=-10mm, xshift=-9mm] {node[leaf]{16}}
            child[yshift=-10mm] {node[leaf]{18}}
        }
    }
  };

  \node[dashedbox, fit=(l)(m)] {};
  \node[dashedbox, fit=(j)(k)] {};
  \node[dashedbox, fit=(d)(f)] {};
  \node[dashedbox, fit=(h)(i)] {};

    \end{tikzpicture}
    \caption{Rooted tandem duplication tree. Duplication events with more than one gene are marked with a dashed rectangle. The leaves are labeled and positioned in order.}
    \label{fig:rooted-tandem-repeat}
\end{figure}
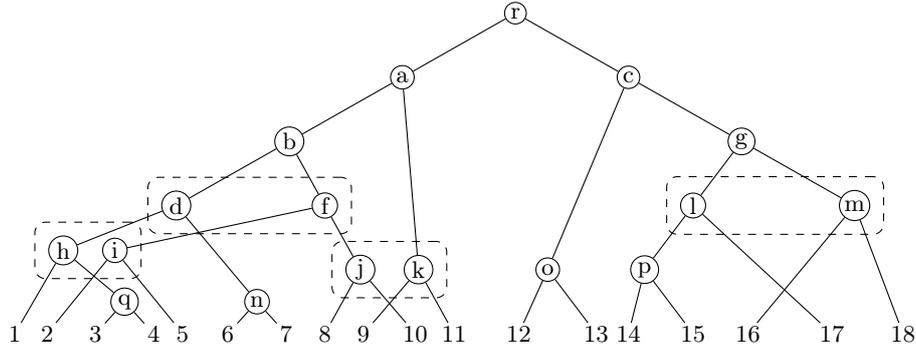

\begin{lemma}[Rooted Duplication Tree's count~\cite{10.1093/molbev/msh115}]
    Let $r_n$ be the number of distinct (non-isomorphic) rooted duplication trees with $n$ leaves. The value $r_n$ follows the recurrence formula:
    \[
    r_n=\begin{cases}
    1 & n=1,2 \\
    \sum_{k=1}^{\lfloor(n+1)/3\rfloor}{(-1)^{k+1}{\binom{n+1-2k}{k}}r_{n-k}} & n\ge 3
    \end{cases}
    \]
\end{lemma}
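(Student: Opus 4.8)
The plan is to recast the recurrence as a signed count and prove it by a sign-reversing involution. Call an adjacent pair of leaves a \emph{cherry} if they are siblings, i.e.\ of the form $\lc(g),\rc(g)$ for a gene $g$ that underwent a $1$-duplication; note that an $\ell$-duplication with $\ell\ge 2$ instead produces \emph{crossing} (non-adjacent) sibling pairs, since its left and right copies are separated by $\ell-1$ positions. First I would prove a contraction lemma: merging the two leaves of a cherry back into $g$ turns an $n$-leaf rooted duplication tree into a valid $(n-1)$-leaf one, because the two leaves of a bottom cherry lie in no later duplication block, so the $1$-duplication creating them may be reordered to occur last; conversely, expanding any leaf by a $1$-duplication is always legal. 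Let $N_k$ count pairs $(T,C)$ where $T$ has $n$ leaves and $C$ is a set of $k$ pairwise non-adjacent cherries of $T$. Contracting all cherries of $C$ simultaneously gives a bijection onto pairs $(T',M)$ with $T'$ an $(n-k)$-leaf tree and $M$ a set of $k$ non-consecutive marked leaves of $T'$. Since the number of $k$-subsets of $n-k$ positions with no two consecutive is $\binom{n+1-2k}{k}$, this yields $N_k=\binom{n+1-2k}{k}\,r_{n-k}$.

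With $N_0=r_n$ (as $\binom{n+1}{0}=1$), the stated recurrence is exactly the vanishing $\sum_{k\ge 0}(-1)^k N_k=0$ for $n\ge 3$; the range $1\le k\le\lfloor(n+1)/3\rfloor$ is automatic, since $\binom{n+1-2k}{k}=0$ once $k>(n+1)/3$. Writing $\sum_k(-1)^k N_k=\sum_{T}\sum_{C}(-1)^{|C|}$, where the inner sum ranges over the non-adjacent cherry sets of a fixed $T$, this equals $\sum_{T} I_T(-1)$, where $I_T$ is the independence polynomial of the graph whose vertices are the cherries of $T$ and whose edges join adjacent cherries. Thus the whole recurrence reduces to the single combinatorial identity $\sum_{T} I_T(-1)=0$, taken over all $n$-leaf rooted duplication trees.

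I would prove this identity by exhibiting a fixed-point-free sign-reversing involution $\iota$ on the set of pairs $(T,C)$, graded by the sign $(-1)^{|C|}$. Scanning the leaves from left to right, $\iota$ acts at the first \emph{actionable} site: at a cherry that already lies in $C$, or that can be inserted into $C$ without creating an adjacency, $\iota$ simply toggles that cherry, changing $|C|$ by one and flipping the sign. The delicate case is a tree whose leftmost unresolved bottom event has size $\ell\ge 2$: such a tree carries a crossing pattern and has \emph{no} cherry there to toggle, yet its contributions must still cancel. Here $\iota$ must perform a local surgery that trades the $\ell$-duplication bottom for $\ell$ adjacent cherries, recording an appropriate subset of them in $C$, so that the unselected cherry configurations of the cherry-rich tree annihilate the crossing-bottomed tree. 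For $n=4$ this is precisely the cancellation of the size-$2$-duplication tree (with $C=\emptyset$, sign $+1$) against the double-cherry tree with one of its two cherries selected (sign $-1$).

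The main obstacle is making this surgery well defined uniformly in $\ell$, compatible with the left-to-right scan, and genuinely involutive; this is the structural heart of the statement, encoding the fact that in the signed count a crossing $\ell$-duplication is interchangeable with a row of $\ell$ cherries. One must then verify that $\iota$ has no fixed points exactly when $n\ge 3$ — the base cases $r_1=r_2=1$ are precisely where fixed points appear — which also explains the threshold in the statement. The remaining ingredients (the contraction lemma, the bijection giving $N_k=\binom{n+1-2k}{k}r_{n-k}$, and the algebraic rearrangement to the vanishing form) are routine once the involution is in hand.
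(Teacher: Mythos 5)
First, a point of reference: the paper does not prove this lemma at all --- it imports the recurrence from the cited literature~\cite{10.1093/molbev/msh115} --- so your attempt must stand entirely on its own. The first half of your proposal does stand: cherries can only arise from $1$-duplications (an $\ell$-duplication with $\ell\ge 2$ leaves at least $\ell-1$ leaves between the two copies of any gene, so sibling leaves from such an event are never adjacent); no later block can straddle a pair of adjacent sibling leaves, so contraction of a cherry preserves validity and is inverted by expansion; and the resulting bijection does give $N_k=\binom{n+1-2k}{k}\,r_{n-k}$, with the recurrence equivalent to $\sum_{k\ge 0}(-1)^k N_k=0$. This reformulation checks out numerically ($n=4$: $6-6+0=0$; $n=5$: $22-24+2=0$), and the non-adjacency restriction on $C$ is genuinely needed, as your $n=4$ example shows.

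The gap is the involution itself: it is never constructed, and the one concrete mechanism you sketch provably cannot work uniformly in $\ell$. Since the cherry-adjacency graph of any tree is a disjoint union of paths, a fixed tree $T$ contributes $I_T(-1)=\prod_{\text{runs}} f_\ell$, where $f_\ell := I_{P_\ell}(-1)$ satisfies $f_\ell=f_{\ell-1}-f_{\ell-2}$, $f_0=1$, $f_1=0$, hence is periodic with period $6$: $1,0,-1,-1,0,1,1,0,\dots$. Your surgery asks the unselected cherry configurations of the ``row of $\ell$ cherries'' tree to annihilate the $+1$ of the $\ell$-duplication-bottomed tree; after pairing that $+1$ against one odd configuration, the remaining configurations of the cherry-rich tree sum to $f_\ell+1$, which vanishes only when $f_\ell=-1$, i.e., only for $\ell\equiv 2,3 \pmod 6$. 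Already at $\ell=4$ a run of four adjacent cherries contributes $f_4=0$, so a $4$-duplication bottom has no local partner: its cancellation necessarily involves structurally different trees (e.g., a $2$-duplication beside two cherries), and the matching must be orchestrated globally and recursively. That orchestration is exactly the content of the identity, and it is missing; ``routine once the involution is in hand'' understates that the involution \emph{is} the theorem. (A minor slip besides: the identity $\sum_k(-1)^kN_k=0$ also holds at $n=2$, since $r_2-\binom{1}{1}r_1=0$; only $n=1$ is exceptional, so fixed points do not appear at both base cases as you claim.) Given the paper's own machinery, an alternative completion would be to prove the recurrence for counter arrays $\sigma_n$ directly (the paper establishes $r_{n+1}=\sigma_n$ in Lemma~\ref{lem:dup-eq-count}), but that derivation is likewise nontrivial and would have to be supplied.
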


Using Stirling's approximation~\cite{knuth1994concrete}, it can be shown that $r_n=o(6.75^n)$.

\paragraph{Graph Theory.}  
We use standard graph-theoretic terminology. A graph \(G = (V, E)\) consists of a finite set of vertices \(V\) and a set of edges \(E \subseteq V \times V\), where all edges are undirected and loop-free. A \emph{proper vertex coloring} is a function \(\psi: V \to \Sigma\) such that \(\psi(u) \ne \psi(v)\) for every edge \((u,v) \in E\); the set \(\Sigma\) is referred to as the set of colors. The minimal number of colors required for a proper coloring of \(G\) is its chromatic number, denoted \(\chi(G)\).

\section{Combinatorial Complexity of the Manacher Array}\label{sec:ent}

In this section, we discuss the combinatorial complexity of the Manacher array, i.e., we attempt to find the number $\rho_n$ of distinct Manacher arrays generated from strings of length $n$.

The number of distinct strings of length $n$ with alphabet $\Sigma$ is $|\Sigma|^n$. However, the number of distinct Manacher arrays corresponding to these strings is smaller. We show an upper and lower bound for $\rho_n$. The lower bound is $\Omega\left(3^n\right)$, while the upper bound is $\Oh{r_{n+1}}$, the number of rooted duplication trees with $n+1$ leaves.

\begin{lemma}
Let \(\rho_n\) denote the number of distinct Manacher arrays corresponding to strings of length \(n\). Then \(\rho_n = \Omega(3^n)\).
\end{lemma}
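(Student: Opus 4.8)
The plan is to build, for every $n\ge 2$, an injection from the set $\{0,1,2\}^{\,n-2}$ of ternary \emph{choice sequences} into the set of Manacher arrays of length-$n$ strings; since distinct choice sequences will land on distinct arrays, this gives $\rho_n\ge 3^{\,n-2}=\Omega(3^n)$. I work over an unbounded alphabet, which is harmless since the Manacher array is alphabet-agnostic, and a large alphabet lets me introduce \emph{fresh} symbols that cannot match anything placed so far. I fix a seed $S[1]=a\neq b=S[2]$ and grow $S$ one symbol at a time, maintaining the invariant that $S$ always contains at least two distinct symbols; equivalently, writing $c=S[t]$ for the final symbol of the current string of length $t$ and $c^{\ell}=S[t-\ell+1..t]$ for its maximal final run, we always have $t-\ell\ge 1$ and the preceding symbol $d:=S[t-\ell]$ satisfies $d\neq c$.

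At each step I offer three moves for the next symbol $S[t+1]$: (B) repeat the last symbol, $S[t+1]=c$; (C) copy the symbol before the run, $S[t+1]=d$; or (A) use a brand-new fresh symbol. Each move preserves the invariant: after (C) or (A) the new final run has length one and is preceded by $c\neq S[t+1]$, while after (B) the run lengthens but is still preceded by $d\neq c$. Hence all three moves are always available. They are distinguishable in the array already at creation: move (B) is the unique move forcing $S[t]=S[t+1]$, hence the unique move with $\mathsf{A}[2t]\ge 1$; moves (C) and (A) both leave $\mathsf{A}[2t]=0$, but they differ at the center $m$ of the block $S[t-\ell..t+1]$, because move (C) completes the palindrome $d\,c^{\ell}\,d$, so the radius at $m$ reaches the pair of positions $(t-\ell,\,t+1)$, whereas the fresh symbol of move (A) cannot match $d=S[t-\ell]$ and the radius at $m$ falls short.

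It remains to upgrade these per-step distinctions into global injectivity, which is the crux. Given two distinct choice sequences, let $p$ be the first position at which they differ; the two strings then share the prefix $S[1..p-1]$, so the quantities $\ell$, $d$, $c$, and the center $m$ computed at that step coincide. The distinguishing array entries identified above are monotone equality tests between the \emph{fixed} positions $p-1,p$ (for move (B)) and $t-\ell=p-1-\ell,\ p$ (for (C) versus (A)); since later appends only add symbols to the right and never alter symbols already placed, the mismatch or match at those positions is permanent, and radii are non-decreasing under appending. Consequently $\mathsf{A}[2(p-1)]$, respectively the entry at center $m$, separates the two final arrays by a fixed threshold no matter how the strings are continued, so the arrays differ. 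The main obstacle is exactly this permanence claim—ruling out that a longer palindrome created by subsequent symbols silently equalizes the two arrays—and it is resolved by choosing the distinguishing features to be fixed-position equality tests rather than exact radius values, so that a created mismatch can never be undone on the right. Summing over all $3^{\,n-2}$ choice sequences yields $\rho_n\ge 3^{\,n-2}=\Omega(3^n)$.
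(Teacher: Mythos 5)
Your proof is correct, but it takes a genuinely different route from the paper. The paper disposes of this lemma in a single line by citing Lemma~2 of I et al.~\cite{tomohiro:10}: a ternary string is determined by its Manacher array up to a permutation of the alphabet, so the $3^n$ ternary strings of length $n$ map at most $3!$-to-one onto Manacher arrays, giving $\rho_n \ge 3^n/6$. You instead build a self-contained injection: a growth process with three always-available moves (repeat the last symbol, copy the symbol $d$ preceding the final run, or introduce a fresh symbol), an invariant guaranteeing all three moves stay legal, and an injectivity argument showing that the first differing move leaves a permanent trace in the final array---either at the even center $\mathsf{A}[2p]$ (move (B) versus the other two, since $\mathsf{A}[2p]=0$ if and only if $S[p]\neq S[p+1]$, regardless of later symbols) or at the center of the block $d\,c^{\ell}\,d$ (move (C) versus (A), since the symmetric pair $(p-\ell,\,p+1)$ matches in one string and mismatches in the other, forever capping or guaranteeing the radius relative to a fixed threshold). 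Your insight that the distinguishing features should be equality tests at \emph{fixed} symmetric position pairs---which appending on the right can never alter---is precisely what closes the usual gap in such constructions, and the bookkeeping (invariant maintenance, common prefix at the first divergence, the count $3^{n-2}=\Omega(3^n)$) is sound. What your approach buys: it is elementary and self-contained, requiring no external reconstruction result, and it exhibits an explicit exponential family of pairwise-distinguishable arrays. What the paper's approach buys: brevity and a marginally better constant ($3^n/6$ versus your $3^n/9$), neither of which matters for the asymptotic claim.
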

The Lemma follows from Lemma 2 in~\cite{tomohiro:10}, where they prove that a string over a ternary alphabet can be reconstructed from its Manacher array, up to a permutation.

We now proceed to prove the upper bound with rooted tandem repeat trees.

\begin{theorem}\label{thm:tandem-trees}
    Let \(\rho_n\) denote the number of distinct Manacher arrays corresponding to strings of length \(n\), and let $r_n$ denote the number of rooted tandem duplication trees with $n$ leaves. Then \(\rho_n \le r_{n+1}\).
\end{theorem}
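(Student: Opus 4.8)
The plan is to exhibit an injection $\Phi$ from the set of distinct Manacher arrays of length-$n$ strings into the set of rooted tandem duplication trees with $n+1$ leaves; since an injection between finite sets forces $\rho_n \le r_{n+1}$, this is exactly the claimed bound. The guiding principle is that the recursive palindromic structure captured by $\man$ is, up to a reflection, the same self-similar doubling that defines a duplication event. Concretely, a palindromic Zimin factorization $\ZP{k} = \ZP{k-1}\,P_k\,\ZP{k-1}$ places two identical copies of the block $\ZP{k-1}$ symmetrically about a palindromic core $P_k$, and the fact that the two occurrences are identical copies (rather than one being a reversal) is precisely what a tandem-duplication event produces when it splits a gene into its $\lc$ and $\rc$ descendants.

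First I would fix, for each Manacher array $\man$, a canonical representative string via Lemma~2 of~\cite{tomohiro:10} (a ternary string determined up to relabeling), and adjoin one boundary/sentinel element so that the object to be encoded carries exactly $n+1$ positions; these positions become the $n+1$ ordered leaves of the target tree, and pinning down this extra leaf cleanly is part of the bookkeeping that produces the ``$+1$''. Next I would read off from $\man$ the nesting of maximal palindromes: using \cref{def:pal-dep}, every palindromically dependent position is forced to equal its mirror across the center of some palindrome, and \cref{p:pal-per} lets me organize these forced equalities into a canonical, deepest-first hierarchy. I would then convert this hierarchy into a binary tree by making each palindromic doubling an internal node whose two subtrees are the mirrored halves, fixing the outward-reading orientation so that the reverse of the left half is matched against the left-to-right reading of the right half.

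The crux of the argument, and the step I expect to be hardest, is twofold. I must verify that the binary tree produced in this way is actually \emph{realizable} as a rooted tandem duplication tree: not every binary tree arises from duplication events, since duplicated blocks must be contiguous and are copied \emph{in order}, whereas palindrome halves are \emph{reversed} copies. Reconciling this reflection with the order-preserving nature of $\lc/\rc$ duplication is the main technical obstacle, and it is what dictates how subtrees must be oriented and interleaved (the interleaving is already visible in small cases, where a $2$-duplication yields a leaf order that is not the plane left-to-right order). Second, I must show $\Phi$ is injective, which I would establish by reconstructing $\man$ from the tree: the tree records exactly which positions are identified by reflection and at what nesting depth, and these data determine every entry of the array. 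The inequality $\rho_n \le r_{n+1}$ then follows immediately, and the strictness seen for small $n$ (for instance $\rho_3 = 5 < 6 = r_4$) reflects the fact that some duplication trees never lie in the image of $\Phi$.
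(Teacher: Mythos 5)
Your overall strategy---exhibit an injection from Manacher arrays of length-$n$ strings into rooted duplication trees with $n+1$ leaves---is the right shape for the inequality, and your small-case check ($\rho_3 = 5 < 6 = r_4$) is correct. But the proposal has genuine gaps at exactly the two places you yourself flag as hardest, and the first is not merely hard but faces an obstruction your outline does not resolve. First, ``the nesting of maximal palindromes'' is not a well-defined hierarchy: the maximal palindromes of a string need not form a laminar family. In $\texttt{ababa}$, the maximal palindromes $S[1..3]$ and $S[3..5]$ overlap at position $3$ yet neither contains the other, so there is no canonical way to turn the set of maximal palindromes into a binary tree whose internal nodes are ``palindromic doublings.'' Second, the reflection problem is real and is not bookkeeping: a tandem duplication produces a \emph{direct} repeat (the block reappears in the same orientation, contiguously), whereas a palindrome is an \emph{inverted} repeat; the two coincide only when the repeated half is itself a palindrome---the Zimin case you lean on---which is a very special subfamily of the structures you must encode. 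You name this obstacle and defer it (``it dictates how subtrees must be oriented and interleaved''), but no construction is given, and injectivity is likewise asserted (``reconstruct $\man$ from the tree'') without the map ever being pinned down. As it stands, this is a plan whose crucial steps are missing, not a proof.

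For comparison, the paper sidesteps any structural dictionary between palindromes and duplications; the correspondence it uses is purely enumerative, through a common intermediate object called a \emph{counter array}: an integer array with $1 \le a_i \le i$ and $a_{i+1} \ge a_i - 1$. On one side, the Manacher array is encoded injectively by the sequence of centers $c_i$ of the maximal palindromic suffix of each prefix $S[1..i]$; these centers are non-decreasing, the full array can be reconstructed from them, and after normalization the resulting sequence satisfies exactly the counter-array conditions. On the other side, every rooted duplication tree decomposes \emph{uniquely} into an ordered list of duplication events, and writing each event as a strictly decreasing run yields a bijection between trees with $n+1$ leaves and counter arrays of length $n$, i.e.\ $r_{n+1}=\sigma_n$. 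The inequality $\rho_n \le \sigma_n = r_{n+1}$ then follows by composing an injection with a bijection, with no need to orient subtrees, handle overlapping palindromes, or reconcile reflections with in-order copying. If you want to salvage your direct approach, you would essentially have to rediscover such a monotone, suffix-based encoding, because the palindrome-nesting tree you describe does not exist in general.
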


Throughout the proof, we use an auxiliary combinatorial structure, denoted as the {\em counter array}. The array is defined as follows:

\begin{definition}[Counter array]\label{def:counter-array}
Let $A=(a_1,\dots,a_n)$ be an array of length~$n$ of integers
satisfying:
\[
  1\;\le\;a_i\;\le\; i
  \quad\text{and}\quad
  a_{i+1}\;\ge\;a_i-1
  \qquad(1\le i<n).
\]
We call $A$ a \emph{counter array} and denote by $\sigma_n$ the number
of such arrays.
\end{definition}

The proof process of~\cref{thm:tandem-trees} consists of three parts; 
In the first part of the proof (\cref{lem:dup-eq-count}), we prove that 
$\sigma_n=r_{n+1}$. In the second part (\cref{l:compact-man}), we show a compact representation of the Manacher array and prove that the Manacher array can be retrieved from it. In the third and last part (\cref{lem:compact-is-counter}), we show that the number of compact representations corresponding to strings of length $n$ is exactly $\sigma_n$, establishing $\rho_n \le \sigma_n = r_{n+1}$, as claimed.

\subsection{Rooted Duplication Trees}

In this subsection, we prove the following lemma:
\begin{lemma}[Duplication trees and counter arrays]\label{lem:dup-eq-count}
The number of rooted duplication trees with $n$ leaves equals the number of counter arrays of length $n$. That is,
\[
    r_{n+1} = \sigma_n.
\]
\end{lemma}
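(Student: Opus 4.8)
The plan is to prove the identity by an explicit bijection between rooted duplication trees with $n+1$ leaves and counter arrays of length $n$ (\cref{def:counter-array}), recording one number for each of the $n$ internal nodes of the tree. A binary tree with $n+1$ leaves has exactly $n$ internal nodes, and since the number of active genes always equals one plus the number of internal nodes created so far, just before the $i$-th internal node is born the working array holds exactly $i$ genes. I would replay the duplication history one internal node at a time under a fixed \emph{canonical traversal} of the events, and set $a_i$ to be the index, in the current working array, of the progenitor gene that is duplicated to create the $i$-th internal node. This immediately yields $1 \le a_i \le i$, and forces $a_1 = 1$, matching the definition of a counter array.

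The substance is choosing the canonical traversal so that the slow-decrease constraint $a_{i+1}\ge a_i-1$ holds, and then proving it. The monotonicity condition permits the active index to jump rightward arbitrarily but to move leftward only one step at a time, which is exactly a stack-like, rightmost-active discipline: one may descend into a freshly created block anywhere to the right, but one backtracks to a gene lying to the left only a single position per step. I would therefore define the traversal to expand the rightmost available block first and to retreat by one position only when the current block is exhausted, and then establish the invariant that, under this discipline, consecutive recorded indices never drop by more than a unit. This is where the contiguity of an $\ell$-duplication—the fact that it acts on $\ell$ consecutive genes and lays its offspring down as one contiguous block—must be shown to translate precisely into the inequality $a_{i+1}\ge a_i-1$. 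I expect this to be the main obstacle: the delicate point is not that the inequality is necessary but that it exactly characterizes the index sequences coming from valid histories, so that the map overshoots into no spurious arrays and the event partition is faithfully recorded despite the crossings induced by multi-gene duplications.

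Third, I would build the inverse map and show it is well defined: given $(a_1,\dots,a_n)$, rebuild the tree by duplicating, at step $i$, the gene currently at index $a_i$, following the same rightmost-active discipline. The hypothesis $a_{i+1}\ge a_i-1$ guarantees that the referenced gene always exists and that genes duplicated in immediate succession occupy consecutive positions \emph{exactly} when they belong to a common $\ell$-duplication, so the recovered event partition is legal and unique. The key verification is that distinct event partitions on the same underlying shape receive distinct counter arrays and are each reconstructed correctly—the smallest witness being the two balanced trees on four leaves (two $1$-duplications versus a single $2$-duplication), which the encoding must separate. Establishing that the two constructions are mutually inverse then certifies the bijection and gives $r_{n+1} = \sigma_n$ directly.

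As an organizing device, and as a fallback should the event-partition bookkeeping prove too delicate, I would encode everything through the transfer recurrence $f_{i+1}(v)=\sum_{u=1}^{v+1} f_i(u)$ satisfied by the number $f_i(v)$ of length-$i$ counter arrays ending in value $v$, with $\sigma_n=\sum_v f_n(v)$. One can then attempt to show that $\sigma_n$ obeys the same recurrence and initial conditions as $r_{n+1}$ from the rooted-duplication-tree count lemma, for instance via the generating function of the $f_i(v)$; matching the initial terms $1,2,6,22,92,\dots$ already fixes the normalization, so that this route would close the proof even independently of the explicit bijection.
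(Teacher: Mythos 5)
Your main route is the same bijection the paper uses: one recorded index per internal node, unit drops inside a multi-gene event, and decoding by cutting the array wherever the next entry fails to drop by exactly one. But the crux of that bijection is the canonical order on events, and the discipline you fix --- ``expand the rightmost available block first'' --- is backwards, and it fails on exactly the four-leaf witness you name. Take the tree obtained by duplicating the root gene and then performing a single $2$-duplication of the resulting pair: its encoding is $(1,2,1)$ (the block $\{1,2\}$ read right-to-left). Now take the tree obtained by duplicating the root gene and then performing two independent $1$-duplications: your rightmost-first replay duplicates position $2$ first (record $2$), then position $1$ (record $1$), giving $(1,2,1)$ again --- a collision, so the encoding is not injective and the decoder cannot distinguish one $2$-duplication from two adjacent $1$-duplications. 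Worse, when the two independent events are far apart the output is not even a counter array: in a caterpillar-like tree where duplications of genes $2,3,4$ are replayed before a pending duplication of gene $1$, rightmost-first records $(1,2,3,4,1)$, and the final step violates $a_{i+1}\ge a_i-1$.

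The fix is the opposite discipline, and that is where the real work lies. Between events the recorded index must never decrease (rightward jumps only), so independent events must be replayed left-to-right; then a drop of exactly one unambiguously signals ``same event,'' which is what makes the decoder well defined. This is precisely what the paper establishes in \cref{lem:decompose}: every tree admits a unique decomposition $B_1\preceq\dots\preceq B_k$ under the order generated by $B\prec B'$ iff $\max B<\min B'$, proved by induction by extracting the \emph{last} event, identified among the low nodes via the interleaving criterion on their child leaves. Your proposal needs exactly this uniqueness statement --- both to make the encoding well defined on trees (a tree does not come equipped with a preferred temporal history) and to get injectivity --- and nothing in the sketch supplies it; the stack-like reading of $a_{i+1}\ge a_i-1$ points you toward the wrong canonical order. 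The recurrence fallback does not close the gap either: matching initial terms only helps after one shows that $\sigma_n$ satisfies the alternating-binomial recurrence for $r_{n+1}$, and no route to that identity is indicated.
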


The proof is primarily combinatorial and requires several additional definitions related to rooted duplication trees. We refer to these trees simply as {\em duplication trees}.

Let $T$ be a duplication tree. All non-leaf nodes in $T$ have left and right children. We refer to the ordered list of genes at the leaves of the duplication as the {\em leaves array} of $T$. In~\cref{fig:rooted-tandem-repeat}, the leaves array of $T$ is the numbers from $1$ to $18$.

Given a duplication tree $T$ with leaves array $A=\{g_1,g_2,\dots,g_m\}$, a duplication event is a continuous indices array $B=\{i,i+1,\dots,i+\ell-1\}$. Applying the event $B$ to the array results in:
\[
B(A)=\{g_1,g_2,\dots,g_{i-1},lc(g_i),lc(g_{i+1}),lc(g_{i+\ell-1}),rc(g_{i}),\dots,rc(g_{i+\ell-1}),\dots,g_m\}
\]

Let $A = \{g_1,\dots,g_m\}$ be the leaves array.
We define a strict total order $\prec$ on the leaves by
$g_i \prec g_j \iff i < j$ for $1 \le i,j \le m$;
thus $g_1 \prec g_2 \prec \cdots \prec g_m$.
When needed, we write $g_i \preceq g_j$ to mean $i \le j$.

\textbf{Partial events ordering.} We define a partial order between two duplication events, $B_1$ and $B_2$. We say that $B_1$ is smaller than $B_2$ if $\max B_1 <\min B_2$. Note that the order is not necessarily defined; let $B_1=\{1,2\},B_2=\{2,3\}$, then $B_1\nprec B_2$ and $B_2\nprec B_1$.  However, if $B_1=\{6\},B_2=\{1,2,3\}$, then $B_2\prec B_1$. We say that $B_1 \preceq B_2$ if and only if $B_2 \nprec B_1$. The relation $\preceq$ is defined between every pair of events.

We can now prove a key lemma that also results in an algorithm to decompose a duplication tree into a unique list of duplication events.

\begin{lemma}[Unique event decomposition]\label{lem:decompose}
Any rooted duplication tree with \(n\) leaves admits a \emph{unique}
ordered list \((B_1\preceq B_2 \preceq\dots \preceq B_k)\) of duplication events
such that applying the events in that order recreates the tree’s leaves.
\end{lemma}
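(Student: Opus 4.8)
The plan is to argue by induction on the number of leaves $n$, peeling off the $\preceq$-largest (rightmost) event and reducing $T$ to a smaller duplication tree. The engine is a structural notion of a \emph{reducible pattern}: a block of consecutive leaves $g_p,\dots,g_{p+2\ell-1}$ in the leaves array such that for every $0\le k<\ell$ the leaves $g_{p+k}$ and $g_{p+\ell+k}$ are precisely the two children of a common internal node. Undoing such a pattern means replacing each of these $\ell$ sibling pairs by its parent, shrinking the leaves array by $\ell$. I would first observe that a reducible pattern always exists for $n\ge 2$: since $T$ is a duplication tree it admits at least one construction, and the genes created by the temporally last event of that construction are never duplicated again, hence are all leaves and form exactly such a (left copies, then right copies) interleaved block. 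Thus the set of reducible patterns is nonempty, and I would then show that these patterns occupy pairwise disjoint leaf-intervals, so that the rightmost one, $R(T)$, is well defined.

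The heart of the argument is that in any valid sorted construction $B_1\preceq\dots\preceq B_k$ the final event $B_k$ must equal $R(T)$. Two facts drive this. First, $B_k$ is itself reducible, because nothing created by the last event is touched afterwards. Second, in a sorted construction the positions at which events are applied are non-decreasing and track the left-to-right position of the leaf-interval each event is responsible for; consequently the internal nodes of a reducible pattern lying strictly to the right of $B_k$'s block could only have been made internal by an event of larger position, i.e.\ one applied after $B_k$, which is impossible. Hence no reducible pattern lies to the right of $B_k$, forcing $B_k=R(T)$. I would also record here the companion fact distinguishing single- from multi-gene events: an $\ell$-duplication produces the interleaving $\lc(g_i),\dots,\lc(g_{i+\ell-1}),\rc(g_i),\dots,\rc(g_{i+\ell-1})$, which no sequence of smaller events can reproduce in the same leaf order, so the grouping of internal nodes into events, and in particular the extent $\ell$ of $R(T)$, is determined by $T$ alone.

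With these in hand the induction closes cleanly. Undoing $R(T)$ yields a duplication tree $T'$ on $n-\ell$ leaves; by the inductive hypothesis $T'$ has a unique sorted decomposition $B_1\preceq\dots\preceq B_{k-1}$, and appending $R(T)$, which is $\preceq$-maximal, preserves sortedness and gives a sorted decomposition of $T$. Conversely, any sorted decomposition of $T$ ends in $R(T)$ by the previous paragraph and restricts to a sorted decomposition of $T'$, unique by induction; this yields both existence and uniqueness. I expect the main obstacle to be the index bookkeeping underlying the second paragraph: making precise the correspondence between the position of an event in the (shifting) array at its time of application and the position of its leaf-interval in the final array, and using it to prove both that reducible patterns are disjoint (so that ``rightmost'' is meaningful) and that a rightmost reducible pattern cannot have been created before the last event. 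The interleaving bookkeeping for $\ell>1$, where a single event's subtrees are split across the array, is where this will be most delicate.
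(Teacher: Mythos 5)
Your proposal follows essentially the same route as the paper's proof: your ``reducible patterns'' are exactly the paper's candidate blocks of low nodes (sibling pairs whose child leaves interleave and form a contiguous interval of the leaves array), your rightmost pattern $R(T)$ plays the role of the paper's $B^\ast$, and both arguments peel off this last event and induct on the number of leaves. If anything, you are slightly more explicit than the paper on the uniqueness direction (arguing that \emph{every} sorted decomposition must end with $R(T)$), while sharing the same acknowledged delicacy about comparing event positions taken in arrays at different time steps.
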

\begin{proof}\label{ap:pr:decompose}
We begin by defining {\em low nodes} for the proof.

\begin{definition}[Low nodes]
A node in the tree is called {\em low} if and only if all of its children are leaves. For two low nodes $u,v$ write $u\prec v$ iff
$\lc(u)\prec\lc(v)$ in the leaf order.
\end{definition}

We induct on $n$. For $n=1$ the tree consists of the original gene, and no duplication is performed.

\medskip\noindent
\emph{Induction step.}
Assume every tree with $n-1$ leaves has a unique decomposition, and let
$T$ have $n$ leaves.  Call a node \emph{low} if both its children are
leaves; let $\mathcal U$ be the set of low nodes of $T$.

\paragraph{Identifying the last event.}

Low nodes created by the \emph{same} duplication event are precisely those
whose child leaves interleave:
\[
  \lc(u)\prec\lc(v)\prec\rc(u)\prec\rc(v)
  \quad\text{or}\quad
  \lc(v)\prec\lc(u)\prec\rc(v)\prec\rc(u).
\]
Partition $\mathcal U$ into blocks according to this interleaving rule.
Each block is a subset of one duplication event; if the leaves generated by a block are not contiguous in the leaves array, the block cannot be the \emph{last} event and is discarded.
Of the blocks that remain, choose the one with the maximal low node; call it $B^{\ast}$.
Its leaves occupy a contiguous interval
$\{i,\dots,i+2\ell-1\}$, so the event itself is
$B^{\ast}=\{i,\dots,i+\ell-1\}$.

\paragraph{Remove the last event.}
Delete the $\ell$ right‑copy leaves of $B^{\ast}$ and suppress the resulting
degree‑1 parents.  The resulting tree $\widetilde T$ has $n-\ell$ leaves.
By the induction hypothesis, $\widetilde T$ decomposes uniquely as
$(B_1,\dots,B_k)$.

\paragraph{Ordering.}
It remains to show $B_k\preceq B^{\ast}$, ensuring the combined list
\[ (B_1,\dots,B_k,B^{\ast}) \]
is totally ordered.
There are two cases:

1.  If $B_k$ was already a candidate block in $T$, $B_k\prec B^{\ast}$
    because $B^{\ast}$ was chosen as the highest candidate.

2.  Otherwise, $B_k$ contains a leaf that stopped being low only after
    $B^{\ast}$ was removed; that leaf lies in $B^{\ast}$.
    Hence $\min B_k\le\max B^{\ast}$, i.e.\ $B_{m-1}\preceq B^{\ast}$.

\noindent
Thus $(B_1,\dots,B_k,B^{\ast})$ is a valid ordered decomposition of $T$.
Uniqueness follows because the construction of $B^{\ast}$ and the induction
hypothesis are unique. \qed

\begin{example}
    In~\cref{fig:rooted-tandem-repeat}, the set of low nodes is $\{i,q,n,j,k,o,p,m\}$. The partition of this set into duplication subsets is $\{\{i\},\{q\},\{n\},\{j,k\},\{o\},\{p\},\{m\}\}$, and the set of leaves generated by these subsets is:
    \[
    \{\{2,5\},\{3,4\},\{6,7\},\{8,9,10,11\},\{12,13\},\{14,15\},\{16,18\}\}
    \]
    It can be seen, that the leaves corresponding to $\{i\}$ and $\{m\}$ do not form a continuous list, and therefore are discarded, and we are left with the duplication candidates $\{\{q\},\{n\},\{j,k\},\{o\},\{p\}\}$. The highest duplication event is $\{p\}$, which corresponds to $\{14,15\}$, and therefore the last duplication event in this example is $\{14\}$, and the previous leaves array $\wit{A}$ is:
    \[(1,2,3,4,5,6,7,8,9,10,11,12,13,p,16,17,18)\]
    By applying this procedure again, the duplication event will be the one that created $(p,16,17,18)$, which is $\{14,15\}$, and $\{14,15\}\preceq \{14\}$.
\end{example}

\end{proof}







Next, we prove that there is a bijection between rooted duplication trees and counter arrays, as stated at the beginning of the subsection at~\cref{lem:dup-eq-count}.
    
\begin{proof}
    \textit{Encoding events as a counter array.}
    Let $T$ be a duplication tree with events
    $B_1\preceq\dots\preceq B_{k}$ and
    $B_i=(b_i,\dots,b_i+\ell_i-1)$.
    Encode $B_i$ by the strictly decreasing sequence
    $(b_i+\ell_i-1,\dots,b_i)$ and concatenate the
    $k$ sequences to obtain $A=(a_1,\dots,a_n)$.
    Inside an event, the drop is exactly 1; between
    events, we have $a_{j+1}\ge a_j$ \footnote{Because the events are ordered.}, 
    so $a_{j+1}\ge a_j-1$ for all $j$.
    While encoding $B_i$ there are exactly
    $1+\sum_{t<i}\ell_t$ genes, hence every
    symbol written satisfies $1\le a_j\le j$.
    Thus, $A$ is a counter array, and its length is $n-1=\sum_{i=1}^k\ell_i$.
    
    \smallskip
    \textit{Decoding a counter array.}
    Conversely, scan a counter array $A$ from left to right.
    Start a new event whenever the next entry fails to drop by 1. If entry $a_j$ starts a new event,
    the number of available genes is exactly $j$. Therefore, an entry that starts
    an event and the subsequent entries within the event are valid gene indices when
    read, hence every recovered event is legal.
    The two procedures are inverses, establishing a bijection between
    rooted duplication trees with $n$ genes and counter arrays of length $n-1$.
    Hence $r_{n+1}=\sigma_n$. \qed
\end{proof}

\subsection{Compact Manacher Array}

At the beginning of the section, we proved that a Manacher array requires $\Omega(n)$ bits to represent. We show a folklore representation of the Manacher array that requires $\Theta(n)$ bits\footnote{Arseny M. Shur, private communication}. Later, we show that this representation is equivalent to the counter array. We conclude by showing that the representation is not tight --- meaning that some counter arrays do not represent a valid Manacher array.

\begin{lemma}\label{l:compact-man}
    The Manacher array $A$ of a string $S$ of length $n$ can be represented using $\Ohn$ bits.
\end{lemma}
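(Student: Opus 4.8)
The plan is to show that a Manacher array, though it has $2n-1$ entries each possibly as large as $n$, is highly redundant and can be encoded by marking only the positions where ``new'' palindromic information appears. First I would recall the classical structure exploited by Manacher's algorithm itself: whenever we are inside a known palindrome centered at some $c$, the radius at a position $i$ to the right of $c$ is forced by the mirror position $i' = 2c - i$ on the left, except possibly at the point where the mirrored palindrome reaches the boundary of the enclosing palindrome. Concretely, if $j = c + \mathsf{A}[c]$ is the right endpoint of the palindrome centered at $c$, then for every center $i$ with $c < i < j$ we have $\mathsf{A}[i] = \min(\mathsf{A}[2c-i],\, j-i)$, and the value needs to be recomputed (``extended'') only when $\mathsf{A}[2c-i] \ge j - i$, i.e.\ when the mirrored radius hits the right boundary.

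The compact representation I would define records, for each center $i$ from left to right, a single bit (or a bounded-size token) indicating whether $\mathsf{A}[i]$ is \emph{forced} by the mirror rule or whether it \emph{extends} beyond the current rightmost reach and thus requires genuinely new symbol-comparison information. The key accounting observation is that extensions are cheap in aggregate: each time a center's palindrome extends the rightmost endpoint, that endpoint strictly increases, and it can increase at most $n$ times total across the whole scan (this is exactly the amortized-linear-time argument behind Manacher's algorithm). Hence the number of ``extension'' events is $\bigO(n)$, and between consecutive events every entry is determined deterministically by the $\min$ rule from previously recorded data. To reconstruct $\mathsf{A}$ from the compact form, I would simply replay the Manacher scan: maintain the current center $c$ and right endpoint $j$, fill forced entries by the $\min$ rule, and at each marked extension position read off how far the palindrome grows (which is encoded by the position of the next marked event). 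This replay both proves correctness of the representation and shows it stores $\Theta(n)$ bits.

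The main obstacle I anticipate is making the encoding genuinely $\bigO(n)$ \emph{bits} rather than $\bigO(n)$ \emph{words}. Recording each extension by storing the extended radius as an integer would cost $\bigO(\log n)$ bits per event and hence $\bigO(n \log n)$ total in the worst case. The fix is to encode the scan as a sequence over a constant-size alphabet describing, at each step, either ``mirror-forced'' or ``advance the right pointer by one,'' so that the total description length is proportional to the number of forced positions plus the total pointer advancement, both $\bigO(n)$. I would argue this by a charging scheme: every unit of pointer advancement is charged to a distinct character of $S$ and never recharged (since $j$ is monotone nondecreasing), giving at most $n$ advance-tokens, while each of the $2n-1$ centers contributes one control token of constant size. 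Summing yields $\bigO(n)$ bits. The final step is to confirm that from this token stream the full array $\mathsf{A}$ is recoverable, which follows because the replay is deterministic given the tokens, completing the claim.
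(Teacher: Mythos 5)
Your proof is correct, but it takes a genuinely different route from the paper's. You encode the execution trace of Manacher's algorithm — one constant-size control token per center (mirror-forced vs.\ extending) plus unary advance tokens charged to the monotone right endpoint — and decode by deterministically replaying the scan; your amortization (total pointer advancement at most $n$) is exactly the standard linear-time argument for the algorithm itself. The paper instead encodes a sequence attached to the string rather than to the algorithm: for each prefix $S[1..i]$ it records the center $c_i$ of the longest palindromic suffix, proves $(c_i)$ is non-decreasing, sets $b_i = 2(c_i - c_{i-1})$ with $\sum_i b_i \le 2n$, writes the $b_i$ in unary with zero separators ($\le 3n-1$ bits), and reconstructs by induction on $|S|$, adding one character at a time and correcting the affected centers via the mirror rule. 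Both arguments rest on the same core idea — a monotone quantity with $\bigO(n)$ total movement, encoded in unary — so for this lemma in isolation the two are equally effective. The difference matters downstream, however: the paper's representation is deliberately shaped so that, in \cref{lem:compact-is-counter}, the shifted prefix sums $2c_i - i$ satisfy exactly the defining inequalities of a counter array (\cref{def:counter-array}), which is what yields $\rho_n \le \sigma_n = r_{n+1}$ in \cref{thm:tandem-trees}; your trace-based code proves the $\Ohn$-bit bound but is not in a bijection-friendly form, so it would need reworking to support that counting step. Two details to pin down in your version: the only genuinely uncertain case in the scan is $\mathsf{A}[2c-i] = j - i$ (when $\mathsf{A}[2c-i] > j - i$ the value is forced to equal $j-i$; encoding it as a zero-length extension still decodes correctly but is unnecessary), and you must fix one tie-breaking convention for updating the pair $(c,j)$ so that encoder and decoder replay identically.
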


\textit{High-level idea.} 
    For each position $i$ let $c_i$ be the center of the maximal palindromic suffix of $S[1..i]$.
    Define $b_1:=1$ and $b_i:=2(c_i-c_{i-1})$ for $i\ge2$.  
    Because each suffix extends the previous one or ends earlier, the sequence
    $C=(c_i)$ is non‑decreasing; hence $\mathsf{B}=(b_i)$ is a monotone sequence of
    non‑negative integers.

    Since $c_i\le n$ and $2c_i = \sum_{j\le i} b_j$,
    \[
    \sum_{i=1}^{n} b_i \le 2n .
    \]
    Encode $\mathsf{B}$ in unary, separating consecutive values by a single 0.  This uses at
    most $(n-1)$ zeros and $\le 2n$ ones, i.e.\ $\le 3n-1 = \Ohn$ bits.

    From the unary code we recover $\mathsf{B}$, prefix‑sum to obtain each $c_i$, and hence
    every center.  A right‑to‑left sweep (mirroring the standard Manacher update)
    then assigns the correct radius to each center in $\Ohn$ time.

And now we proceed to show the full proof.
\begin{proof}\label{ap:pr:compact-man}
    Let $s_i$ be the maximal palindromic suffix of $S[1..i]$, and let $c_i$ be the center of the maximal palindromic suffix. Note that $c_i=i-\frac{|s_i|-1}{2}$, and $c_1=1$.

    The compact representation is a non-decreasing array $B=(b_1,\dots,b_n)$, where $b_1:=1$, and $b_i:=2(c_i-c_{i-1})$. We need to show that:
    \begin{enumerate}
        \item The representation takes $O(n)$ bits.
        \item The Manacher array $\mathsf{A}$ can be reconstructed from the compact representation $B$.
    \end{enumerate}

    Note that we multiply $c_i-c_{i-1}$ by two, since the centers might be half-integers, and we want the array to contain only integers. 

    \paragraph{Space.} First, observe that the centers $(c_i)$ form a non‑decreasing sequence.
    Indeed, suppose $c_{i+1} < c_i$.  Then the maximal palindromic suffix $s_{i+1}$ of $S[1..i+1]$ extends two positions beyond $s_i$; deleting its first and last characters yields a palindromic suffix of $S[1..i]$ that is longer than $s_i$, contradicting the maximality of $s_i$.

    Proceeding to the proof, each center $c_i$ satisfies $c_i\le n$. Consequently, since $2c_i=\sum_{j=1}^i b_j$ and the maximal value of some $c_i$ is $n$, the total sum of elements in $B$ is at most $2n$. 
    
    Since the elements $b_i$ are non-decreasing, we can represent each value of $b_i$ in unary form, and separate values with a zero. The number of zeroes is exactly $n-1$, and the number of ones is at most $2n$. Therefore, the compact representation can be performed using a binary string of length at most $3n-1=\Oh{n}$ bits, as required.

    \paragraph{Reconstruction.}
    Let $B$ be a compact representation of length $n{+}1$; the corresponding
    Manacher array has length $2(n{+}1)-1=2n{+}1$.
    Assume inductively that any compact representation of length $n$
    recovers its $(2n{-}1)$‑entry Manacher array.
    
    \smallskip
    \textit{Base case.}
    For $|B|=1$ the string has length 1, whose single center has radius~0.
    
    \smallskip
    \textit{Inductive step.}
    Let $\mathsf A'$ be the $(2n{-}1)$‑entry array reconstructed from
    $B[1..n]$ by the induction hypothesis.
    The new character adds two centers
    $c=n+\tfrac12$ and $c=n{+}1$, which we initialize with radius 0.
    Compute
    \[
      c_{n+1}=\tfrac12\sum_{i=1}^{n+1} b_i,
      \qquad
      r_{n+1}= \bigl\lceil (n{+}1)-c_{n+1}\bigr\rceil ,
    \]
    and set $\mathsf A'[c_{n+1}]=r_{n+1}$ (this is the new maximal suffix
    palindrome).
    Every existing center whose palindrome now reaches position $n{+}1$ needs its radius increased to
    $r_c=\lceil(n{+}1)-c\rceil$.
    Those centers satisfy $c>c_{n+1}$.
    Their mirrored partners
    $c^\ast=2c_{n+1}-c$ lie to the left of $c_{n+1}$ and \emph{do not}
    reach $n{+}1$, so $\mathsf A'[c^\ast]$ is already correct.
    Update
    \[
        \mathsf A'[c] \;=\;
            \min\bigl(r_c,\; \mathsf A'[c^\ast]\bigr)
            \quad\text{for all integer centers }c\in(c_{n+1},\,n{+}1].
    \]
    
    This completes the reconstruction for length $n{+}1$ and, by induction,
    for all lengths, proving the lemma. \qed
\end{proof}



\begin{lemma}\label{lem:compact-is-counter}
    The compact representation as defined in~\cref{l:compact-man} is equivalent to the counter array.
\end{lemma}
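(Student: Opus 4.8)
The plan is to exhibit an explicit bijection between compact representations and counter arrays (\cref{def:counter-array}) given by the affine map $a_i := 2c_i - i$, where $c_i$ is the center of the maximal palindromic suffix of $S[1..i]$ used in \cref{l:compact-man} (equivalently $a_i = 1 - i + \sum_{j\le i} b_j$, written in terms of the increments $b_j$). The whole content of the lemma is that this map carries the structural constraints of a compact representation exactly onto the two defining inequalities of a counter array, and conversely.

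First I would verify that $\phi\colon (c_i)\mapsto(a_i)$ lands in the counter arrays. Since $2c_i$ is always an integer (a center is a multiple of $\tfrac12$), each $a_i$ is an integer and $a_1 = 2c_1 - 1 = 1$. The box constraint $1\le a_i\le i$ is equivalent to $\frac{i+1}{2}\le c_i\le i$: the upper bound on $a_i$ holds because the maximal palindromic suffix of $S[1..i]$ is nonempty, so $c_i\le i$, and the lower bound holds because that suffix has length at most $i$, hence starts at position $\ge 1$ and has center $c_i\ge\frac{i+1}{2}$. For the second constraint I would invoke the center monotonicity $c_i\ge c_{i-1}$ proved inside \cref{l:compact-man}, which yields $a_i - a_{i-1} = 2(c_i - c_{i-1}) - 1 \ge -1$, i.e.\ $a_i\ge a_{i-1}-1$.

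Next I would build the inverse map. Given a counter array $(a_i)$, set $c_i := (a_i + i)/2$. The conditions $1\le a_i\le i$ translate back to $\frac{i+1}{2}\le c_i\le i$; the counter inequality $a_i\ge a_{i-1}-1$ becomes $c_i - c_{i-1} = \frac{a_i - a_{i-1} + 1}{2}\ge 0$, so the centers are non-decreasing; and $a_1=1$ forces $c_1=1$. These are precisely the constraints characterizing a compact representation, with increments $b_i = a_i - a_{i-1} + 1 \ge 0$ and $b_1 = 1$. As $\phi$ and this map are mutually inverse, the compact representations are in bijection with the counter arrays, so there are exactly $\sigma_n$ of them. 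Combined with \cref{l:compact-man} (each string's Manacher array determines and is determined by its compact representation), the $\rho_n$ distinct Manacher arrays inject into these $\sigma_n$ objects, giving $\rho_n\le\sigma_n = r_{n+1}$.

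The main subtlety, and the point behind the remark that the representation is ``not tight,'' is that the bijection is with the entire combinatorial family of sequences meeting the compact-representation constraints, which strictly contains the family actually realized by strings: at $n=3$ the counter array $(1,1,2)$ corresponds to the center sequence $(1,\tfrac32,\tfrac52)$, which would force $S[1]=S[2]=S[3]$ and hence center $2$, so it is unrealizable, witnessing $\rho_3 = 5 < 6 = \sigma_3$. The one genuine hazard in the write-up is fixing the correct orientation of the affine map: the naive choice $a_i = 2i - 2c_i + 1$ turns center monotonicity into an ``increase by at most $2$'' condition rather than the counter array's ``decrease by at most $1$,'' so one must use $a_i = 2c_i - i$ (equivalently, post-compose with the reflection $a_i\mapsto i+1-a_i$) to make the two definitions coincide.
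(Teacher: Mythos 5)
Your proposal is correct and takes essentially the same route as the paper: the paper likewise forms the prefix sums $\mathsf{B}'[i]=2c_i$ and applies the affine shift $\tilde{b}_i = 2c_i - i$, deriving the two counter-array constraints from the center bounds $\frac{i+1}{2}\le c_i\le i$ and the center monotonicity established in \cref{l:compact-man}. Your explicit inverse map and the $n=3$ witness of non-tightness are nice additions, but the core argument is identical.
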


\begin{proof}\label{ap:pr:compact-is-counter}
    Let $\mathsf{B}$ be the compact representation of length $n$. Define $\mathsf{B}'$ as the prefix-sums array of $\mathsf{B}$, i.e., $\mathsf{B}'[i]=\sum_{j=1}^i\mathsf{B}[j]$. The values in $\mathsf{B}'$ satisfy $\mathsf{B}'[i]=2c_i$, where $c_i$ is the center of the maximal palindromic prefix ending at index $i$. Denote $b_i=\mathsf{B}'[i]$.

    The minimal value for the center $c_i$ is $\frac{i+1}{2}$, and the maximal value is $i$. Therefore, $(i+1)\le b_i\le 2i$. Additionally, since the centers form a non-decreasing sequence, $b_i\le b_{i+1}$. 

    Consider the array $\wit{B}=(\tilde{b}_i)$, where $\tilde{b}_i=b_i-i$. The following holds:
    \begin{enumerate}
        \item $1\le \tilde{b}_i \le i$
        \item $b_i\le b_{i+1}\to b_i-i\le b_{i+1}-(i+1)\to b_{i+1}-b_i\ge -1$
    \end{enumerate}

    Those are the exact requirements of the counter array. \qed
\end{proof}

\begin{corollary}
    The value $\rho_n$ is less than or equal to the number of distinct compact representations from~\cref{l:compact-man}.
\end{corollary}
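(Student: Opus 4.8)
The plan is to exhibit a surjection from the set of compact representations onto the set of distinct Manacher arrays, which immediately yields the desired inequality. Let $\mathcal{M}_n$ denote the set of distinct Manacher arrays arising from strings of length $n$, so that $|\mathcal{M}_n|=\rho_n$, and let $\mathcal{B}_n$ denote the set of distinct compact representations arising from such strings. Each string $S$ of length $n$ determines both a Manacher array $\mathsf{A}(S)$ and a compact representation $\mathsf{B}(S)$, in the sense of~\cref{l:compact-man}.

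The crucial ingredient is the reconstruction direction of~\cref{l:compact-man}: the Manacher array is completely determined by the compact representation. First I would use this to argue that the assignment $\mathsf{B}(S)\mapsto\mathsf{A}(S)$ is well-defined as a function $f\colon\mathcal{B}_n\to\mathcal{M}_n$, since any two strings sharing a compact representation necessarily share a Manacher array (both equal the output of the inductive reconstruction procedure applied to that common compact form). Stating this well-definedness explicitly, with a pointer to the reconstruction half of~\cref{l:compact-man}, is the heart of the argument.

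Next I would verify that $f$ is surjective. Given any $\mathsf{A}\in\mathcal{M}_n$, by definition there is a string $S$ of length $n$ with Manacher array $\mathsf{A}$; its compact representation $\mathsf{B}(S)$ lies in $\mathcal{B}_n$ and satisfies $f(\mathsf{B}(S))=\mathsf{A}$. Hence every element of $\mathcal{M}_n$ is attained, and a surjection from $\mathcal{B}_n$ onto $\mathcal{M}_n$ forces $|\mathcal{M}_n|\le|\mathcal{B}_n|$, that is, $\rho_n\le|\mathcal{B}_n|$, which is exactly the claim.

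I do not expect a genuine obstacle here, as the statement is a cardinality bound obtained via a surjection; the only point requiring care is the \emph{direction} of the reconstruction map. It is tempting to instead associate a compact representation to each Manacher array, but that phrasing does not by itself rule out distinct compact representations collapsing to the same array. The decisive observation is that reconstruction flows from the compact form \emph{to} the Manacher array, which makes the compact representations the at-least-as-large collection and pins down the inequality in the stated direction.
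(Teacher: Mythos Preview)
Your proposal is correct and matches the paper's (implicit) reasoning: the corollary is stated in the paper without a separate proof, treated as an immediate consequence of the reconstruction direction of \cref{l:compact-man}, which is exactly the ingredient you single out. Your explicit surjection argument from $\mathcal{B}_n$ onto $\mathcal{M}_n$ just spells out what the paper leaves tacit.
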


And with that, we conclude the proof of~\cref{thm:tandem-trees}. \qed

\section{Restriction Graphs}\label{sec:graph}

In this section, we prove the following theorem:

\begin{theorem}\label{thm:graph-reconstruction}
    Let $\man$ be a valid Manacher array. There exists a graph $G=(V,E)$, such that every proper coloring of $G$ with $k$ colors yields a string with Manacher array $\man$ and exactly $k$ different alphabet symbols, and every string with Manacher array $\man$ yields a proper graph coloring.
\end{theorem}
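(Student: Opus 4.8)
The plan is to read off the two kinds of constraints a Manacher array imposes on a string and to encode them in a graph whose proper colorings are exactly the strings realizing $\man$. Each entry of $\man$ specifies, at some center, a \emph{maximal} palindrome $S[L..R]$, and this imposes two kinds of requirements: the \emph{equalities} $S[L+t]=S[R-t]$ for all $0\le t\le R-L$, and, by maximality, the single \emph{inequality} $S[L-1]\neq S[R+1]$ whenever $1\le L-1$ and $R+1\le n$. I would first form the equivalence relation $\sim$ on the $n$ positions generated by all the equality constraints (over every center), and let the vertex set $V$ be its classes, writing $[i]$ for the class of position $i$. I would then add, for every center whose palindrome has both neighbours inside $[1,n]$, the edge $\{[L-1],[R+1]\}$. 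Centers whose maximal palindrome abuts an end of the string contribute no edge, since their radius is already forced by the geometry of $[1,n]$.

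Next I would check the two directions, both short once the construction is fixed. For \textbf{soundness}, given a proper coloring $\psi\colon V\to\Sigma$, set $S[i]:=\psi([i])$. Symmetric positions of a recorded palindrome lie in one class and hence receive one symbol, so each recorded palindrome genuinely holds, giving radius at least the recorded value at every center; conversely the edge (or the boundary of the string) forces a mismatch one step out, giving radius at most the recorded value. Hence the Manacher array of $S$ equals $\man$. Since $S[i]=\psi([i])$ and every class contains a position, the symbols used in $S$ are exactly the colors used by $\psi$, so a coloring with exactly $k$ colors yields a string with exactly $k$ symbols. For \textbf{completeness}, given any $S'$ realizing $\man$, put $\psi([i]):=S'[i]$; this is well defined because $S'$ obeys every equality constraint, and it is proper because $S'$ realizes each palindrome as maximal and therefore separates the endpoints $L-1,R+1$ of every edge.

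The step I expect to be the main obstacle is \textbf{well-definedness} of $G$: I must rule out a self-loop, i.e.\ an edge whose two endpoints already lie in the same class, for such an edge would make $G$ uncolorable and the theorem false. This is exactly where the hypothesis that $\man$ is \emph{valid} (realizable by some string) is needed. I would fix a witness string $S_0$ realizing $\man$. Because $S_0$ satisfies every equality constraint, all positions of a class carry the same symbol in $S_0$; because $S_0$ realizes every recorded palindrome as maximal, the two endpoints $L-1,R+1$ of each prospective edge carry \emph{different} symbols in $S_0$. Consequently these endpoints cannot share a class, so no self-loop arises; and simultaneously the assignment $[i]\mapsto S_0[i]$ is a proper coloring, certifying that $G$ is colorable at all. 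The rest is bookkeeping: treating odd and even centers uniformly through the span $[L,R]$, and confirming once more that a palindrome reaching an end of $S$ needs no edge because its radius is then pinned down by $n$ alone.
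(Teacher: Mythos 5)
Your construction is essentially the paper's: your equivalence classes are exactly the connected components of its \emph{equality graph} (\cref{d:eqg}), your edges between $[L-1]$ and $[R+1]$ are precisely the \emph{restriction graph} edges of \cref{d:rsg}, and your soundness/completeness directions mirror the bijection of \cref{l:ctos}. Your explicit treatment of well-definedness---ruling out self-loops by invoking a witness string for the valid array $\man$---is a point the paper leaves implicit, but it is a refinement of, not a departure from, the same approach.
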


A similar framework considering only equality dependency relations in strings was presented by Gawrychowski et al.~\cite{GKRR:20}, where they seek the biggest possible alphabet to satisfy the given set of dependencies. Generally, reconstructing a string from a general set of dependencies using the smallest possible alphabet is NP-hard~\cref{c:colnp}. Our approach incorporates both equality and inequality dependencies, but restricts the dependencies to a specific, non-arbitrary set, thus avoiding this complexity. 

Throughout the proof, we refer to a variant of the Manacher array of a string $S$ as the {\em palindromic fingerprint}, or simply {\em fingerprint} of $S$.  

\begin{definition}\label{def:palf}
    The {\emph palindromic fingerprint} $\fp$ of a string $S$ is a set of all maximal palindromes in $S$.
    
    A pair $(i, j)$ is included in the fingerprint $\fp$ if and only if $S[i..j]$ is a maximal palindrome. Zero-length maximal palindromes are $(i, i-1)$.
    
    The length of a fingerprint $\fp$, denoted $|\fp|$, is equivalent to the length of the underlying string $|S|$.

    A string $T$ is a {\em reconstruction} of $\fp$ if the set of maximal palindromes of $T$ results in the set $\fp$, and $|T|=|\fp|$.
\end{definition}

For example, denote $\fp$ as the fingerprint of string $S$ with length $12$ and a maximal palindrome at $S[2..7]$. Then $(2,7)\in \fp$ and $|\fp|=12$. Additionally, $S$ is a reconstruction of $\fp$.

We present the {\em restriction graph} $G$ of a palindromic fingerprint $\fp$, which is the graph that is described at~\cref{thm:graph-reconstruction}.

The equivalence between restriction graph coloring and palindromic fingerprint reconstruction gives us a straightforward bound on the number of distinct alphabet characters required to reconstruct a given fingerprint - the smallest possible number is the chromatic number of the graph $\chi(G)$. The highest possible number is the number of nodes in $G$, $|V|$.

In this section, we provide a detailed description of the restriction graph. First, we define an auxiliary graph, called the {\em equality graph}.

\begin{definition}[Equality Graph]\label{d:eqg}
The equality graph of a fingerprint $\fp$ of length $n$ denoted as $\eqf:=(V, E)$ is an undirected graph defined as follows:\\
$V=\{i\}_{i=1}^n$, and $(i, j) \in E$ if and only if $(i,j)$ are palindromically dependent (recall from~\cref{def:pal-dep}: If $i,j$ are palindromically dependent, then in every reconstruction $T$ of $\fp$, $T[i]=T[j]$).
\end{definition}

The following follows from the definition of $G_{=}(\fp)$.
\begin{observation}
Let $A_1,A_2, \ldots ,A_\ell$ be the connected components of $\eqf$. Every reconstruction $T$ of $\fp$ satisfies $S[i] = S[j]$ if and only if: 
\[
\exists k\quad\text{s.t.}\quad i,j \in A_k
\]
\end{observation}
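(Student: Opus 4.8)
The plan is to read the biconditional in the strong sense it must carry: the equality $T[i]=T[j]$ is \emph{forced}, i.e.\ holds in every reconstruction $T$ of $\fp$, if and only if $i$ and $j$ lie in a common component $A_k$. (A single reconstruction may of course identify two components by accident---e.g.\ a binary reconstruction that reuses a symbol---so the universal quantifier over $T$ belongs inside the left-hand side of the ``iff''.) I would prove the two implications separately.

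For the direction ``same component $\Rightarrow$ forced equality'', I first unpack the meaning of one edge. By \cref{d:eqg} together with \cref{def:pal-dep}, an edge $(u,v)\in E$ records that $u$ and $v$ are mirror images across the center $c_p$ of some maximal palindrome $p=S[a..b]\in\fp$; concretely $u+v=2c_p$ with $a\le u<c_p<v\le b$. Since any reconstruction $T$ has $p$ among its maximal palindromes, $T[a..b]$ is a palindrome and hence $T[u]=T[v]$. So every edge forces an equality. If $i,j\in A_k$, I take a path $i=v_0,v_1,\dots,v_m=j$ in $\eqf$, apply this observation to each consecutive pair, and chain the equalities by transitivity to obtain $T[i]=T[j]$ in every reconstruction.

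For the converse I argue the contrapositive: if $i,j$ are in distinct components I must exhibit one reconstruction that separates them. Let $S$ be the generating string (itself a reconstruction) and define the \emph{canonical} string $T^{\ast}$ that assigns a fresh distinct symbol to each component, i.e.\ $T^{\ast}[m]=k$ where $m\in A_k$. By the direction just proved, membership in a common component forces equality already in $S$, so $T^{\ast}$ is a \emph{refinement} of $S$: $T^{\ast}[m]=T^{\ast}[m']$ implies $S[m]=S[m']$. It then suffices to check that $T^{\ast}$ is again a reconstruction of $\fp$, for then $i,j$ in different components give $T^{\ast}[i]\ne T^{\ast}[j]$, as required.

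The heart of the argument---and the step I expect to be the main obstacle---is verifying that $T^{\ast}$ reconstructs $\fp$, which amounts to two inclusions. First, every palindrome of $T^{\ast}$ is a palindrome of $S$: a palindrome $T^{\ast}[a..b]$ forces $T^{\ast}[a+t]=T^{\ast}[b-t]$ for all $t$, and refinement upgrades each of these to $S[a+t]=S[b-t]$, so $S[a..b]$ is a palindrome; hence at every center the maximal $T^{\ast}$-radius is at most the corresponding $S$-radius, which rules out any new or longer palindromes. Second, every $p\in\fp$ remains a palindrome of $T^{\ast}$: inside $p$ each mirror pair is palindromically dependent, hence an edge of $\eqf$, hence monochromatic under $T^{\ast}$, so the $T^{\ast}$-radius at each center is at least the $S$-radius. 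Combining the two inequalities, the Manacher arrays of $T^{\ast}$ and $S$ coincide, so $T^{\ast}$ is a genuine reconstruction. The delicate point is precisely this interplay: the refinement inclusion forbids spurious palindromes, while the edge-covering of each palindrome's interior guarantees that none of $\fp$ is lost, and only together do they pin the Manacher array of $T^{\ast}$ to that of $S$.
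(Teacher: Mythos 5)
Your proof is correct, but it does substantially more than the paper, which offers no proof at all: the observation is introduced with the single sentence ``The following follows from the definition of $\eqf$.'' What the paper treats as definitional is exactly your forward direction --- an edge of $\eqf$ is a mirror pair inside a maximal palindrome of $\fp$, every reconstruction preserves that palindrome, and equalities chain along a path within a component. Your converse direction, however, is \emph{not} immediate from \cref{d:eqg}: showing that indices in distinct components are separated by \emph{some} reconstruction requires exhibiting one, and your canonical string $T^{\ast}$ (one fresh symbol per component) together with the two radius inequalities is a genuine argument. In the paper this content only surfaces later, as the coloring-to-string half of \cref{l:ctos} inside \cref{thm:graph-reconstruction}: your $T^{\ast}$ is precisely the ``na\"ive coloring'' of the restriction graph that assigns every vertex a distinct color, which is trivially proper, and your verification that $T^{\ast}$ realizes $\fp$ mirrors the paper's later bijection proof. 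One further point in your favor: your insistence on putting the universal quantifier over $T$ \emph{inside} the left-hand side of the ``iff'' is not pedantry but necessary for truth. Under the literal per-reconstruction reading the statement is false --- e.g.\ $T=\texttt{abca}$ is a reconstruction of its own fingerprint with $T[1]=T[4]$, yet indices $1$ and $4$ lie in different components since $\texttt{bc}$ is not a palindrome --- so your reformulation quietly repairs an imprecision in the paper's statement (which even writes $S[i]=S[j]$ where it means the reconstruction $T$).
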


We now define the restriction graph of $\fp$.
\begin{definition}[Restriction Graph]\label{d:rsg}
Let $G' = \eqf = (V', E')$ be the \textit{equality graph} of $\fp$.

The \textbf{restriction graph} $G = G(\fp) :=(V, E)$ is an undirected graph defined as follows:
\begin{itemize}
    \item The vertex set $V$ consists of the connected components of $G'$, i.e.,  
    \[
    V = \{A_k\}_{k=1}^{\ell}
    \]
    
    \item The edge set $E$ consists of edges between connected components that are constrained by palindromic properties in $\fp$. 

    Informally, if in every possible reconstruction $T$ of $\fp$, $T[i_1]\neq T[i_2]$, then we draw an edge between the vertex $A_k^1$ that contains $i_1$ and the vertex $A_k^2$ that contains $i_2$.
    
    Formally,  
    \[
    E = \{(A_{k_1}, A_{k_2}) ~|~ \exists i \in A_{k_1}, \exists j \in A_{k_2} \text{ such that } (i+1, j-1) \in \fp \}.
    \]
\end{itemize}
\end{definition}

A detailed example of the restriction graph and its construction can be found at~\cref{e:rsg}.

The following two observations can hint to the relation between graph coloring and fingerprint reconstruction:
\begin{observation}\label{ob:all-par}
    Let $G=(V,E)$ be a restriction graph for a palindromic fingerprint $\fp$. The vertices $V=(v_1,v_2,...)$ form a partition of $[n]=\{1,2,\ldots,n\}$.
\end{observation}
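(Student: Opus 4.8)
The plan is to reduce the claim to the elementary fact that the connected components of any graph partition its vertex set. First I would unwind the definitions. By \cref{d:rsg}, the vertex set $V$ of the restriction graph $G$ is by construction exactly the collection $\{A_k\}_{k=1}^{\ell}$ of connected components of the equality graph $\eqf$. By \cref{d:eqg}, the vertex set of $\eqf$ is $V' = \{i\}_{i=1}^{n} = [n]$. Hence it suffices to verify the purely graph-theoretic statement that the connected components of $\eqf$ partition its underlying vertex set $[n]$.

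Next I would invoke the standard argument that ``being in the same connected component'' is an equivalence relation on the vertices of any undirected graph: it is reflexive (each vertex reaches itself via the empty path), symmetric (edges are undirected, so any connecting path may be reversed), and transitive (concatenating a path from $i$ to $j$ with one from $j$ to $k$ yields a path from $i$ to $k$). The equivalence classes of this relation are precisely the connected components $A_1, \ldots, A_{\ell}$. Since the equivalence classes of any equivalence relation on a set $X$ are nonempty, pairwise disjoint, and cover $X$, the family $\{A_k\}_{k=1}^{\ell}$ is a partition of $[n]$, which is exactly the assertion of the observation.

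This statement has essentially no obstacle: it is immediate once $V$ is recognized as the set of connected components of a graph whose vertex set is $[n]$. The only point worth stating explicitly is that passing from $\eqf$ to $G$ neither loses nor duplicates any index of $[n]$, which holds precisely because the components are defined over all of $V' = [n]$ and are, as equivalence classes, simultaneously disjoint and exhaustive.
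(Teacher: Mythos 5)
Your proof is correct and takes exactly the route the paper leaves implicit: the statement is given as an unproved observation precisely because, as you note, the vertex set of $G$ is by definition the set of connected components of $\eqf$, whose vertex set is $[n]$, and connected components of any undirected graph (being equivalence classes of reachability) are nonempty, pairwise disjoint, and cover the vertex set. Nothing further is needed, and your explicit unwinding of \cref{d:rsg} and \cref{d:eqg} matches the paper's intended justification.
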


\begin{observation}\label{o:neqres}
Let $T$ be a reconstruction of $\fp$, and let $G=(V,E)$ be the restriction graph of $\fp$. 
If $(A_{k_1}, A_{k_2})\in E$, then:
\[
\forall i\in A_{k_1}, \forall j\in A_{k_2}\qquad T[i] \neq T[j].
\]
\end{observation}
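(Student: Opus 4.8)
The plan is to reduce the claim about all pairs $(i',j')\in A_{k_1}\times A_{k_2}$ to a single witness pair, and then combine a maximality argument with the equality structure already established for connected components. Throughout, $T$ denotes an arbitrary reconstruction of $\fp$.

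First I would unpack the definition of the edge set. By~\cref{d:rsg}, $(A_{k_1},A_{k_2})\in E$ means there exist witnesses $i\in A_{k_1}$ and $j\in A_{k_2}$ with $(i+1,j-1)\in\fp$, i.e.\ $S[i+1..j-1]$ is a \emph{maximal} palindrome. Since $i,j\in[n]$, the substring $S[i..j]$ is well defined (both endpoints are legal indices). By maximality, $S[i..j]$ is not a palindrome; but $S[i+1..j-1]$ is already a palindrome with the same center $\tfrac{i+j}{2}$, so the only way the one-character-longer extension $S[i..j]$ can fail to be palindromic is $T[i]\neq T[j]$. The zero-length case $j=i+1$ is handled identically: the empty maximal palindrome centered between positions $i$ and $i+1$ is maximal precisely when $T[i]\neq T[i+1]$. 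Hence for the witness pair, every reconstruction satisfies $T[i]\neq T[j]$.

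Second I would upgrade this to all pairs using the component-equality observation preceding~\cref{d:rsg}: indices lying in a single connected component $A_k$ of $\eqf$ are palindromically dependent, so every reconstruction assigns them the same character. Therefore $T[i']=T[i]$ for every $i'\in A_{k_1}$ and $T[j']=T[j]$ for every $j'\in A_{k_2}$, which yields
\[
  T[i']=T[i]\neq T[j]=T[j'] \qquad\text{for all } i'\in A_{k_1},\; j'\in A_{k_2},
\]
as claimed.

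The only delicate step is the maximality-to-inequality deduction: I must ensure the endpoint indices $i,j$ are always legal so that $S[i..j]$ is genuinely defined, and that "not a palindrome" forces $T[i]\neq T[j]$ rather than being vacuously true from an undefined extension. This is exactly why the definition places the witnesses $i,j$ inside $[n]$ while anchoring the palindrome at $(i+1,j-1)$; everything else is bookkeeping through the connected-component equality.
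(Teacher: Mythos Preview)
Your argument is correct and is exactly the intended justification: the paper records this statement as an \emph{observation} with no accompanying proof, so there is nothing substantive to compare against. Your two-step reduction (witness pair via maximality, then propagation through connected components) is the natural unpacking of the definitions; the only cosmetic slip is that you write $S[\,\cdot\,]$ in a few places where you mean the reconstruction $T[\,\cdot\,]$, but since $T$ has fingerprint $\fp$ the reasoning transfers verbatim.
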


We proceed to prove the equivalence ``coloring $\iff$ reconstruction''.

\begin{lemma}\label{l:ctos}
    Let $\psi:V\to \Sigma$ be a coloring of the restriction graph $G(\fp)=(V,E)$ with $n=|\fp|$. 

    Define another function $\psi':[n]\to \Sigma$:
    \[
    \forall A_k\in V\quad \forall i\in A_k\qquad \psi'(i):=\psi(A_k)
    \]
    
    A string $T$ can be reconstructed as  
    \[
    T = \psi'(1) \psi'(2) \ldots \psi'(n),
    \]
    And the fingerprint of $T$ is $\fp$.

    Conversely, given a string $T$ with fingerprint $\fp$, define a coloring function $\psi$:
    \[
    \psi(A_k):=T[i] \quad \text{For some } i\in A_k
    \]
    The function $\psi$ is a proper coloring of $G(\fp)$.

\end{lemma}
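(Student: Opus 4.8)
The plan is to prove both directions of the equivalence by establishing that the two structural constraints --- equality forced by palindromic dependence and inequality forced by maximality --- are \emph{exactly} the constraints captured by the restriction graph, so that a string is a valid reconstruction if and only if its induced labeling is a proper coloring. The key insight I would lean on is that the fingerprint $\fp$ is completely determined by (a) which positions must be equal and (b) which palindromes are \emph{maximal} (i.e.\ cannot be extended). Equality is handled by the equality graph's connected components, which are the vertices of $G$ by construction (\cref{ob:all-par}); maximality is handled by the edge set, since $(i+1,j-1)\in\fp$ being a maximal palindrome means $S[i]\ne S[j]$ is forced.

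For the forward direction, I would start with a proper coloring $\psi$ and the induced string $T=\psi'(1)\cdots\psi'(n)$, and verify that $\fp$ is precisely the fingerprint of $T$. First I would show that every $(i,j)\in\fp$ is a palindrome in $T$: by the definition of palindromic dependence (\cref{def:pal-dep}), all the mirror-pair positions inside a palindrome of $\fp$ lie in the same connected component of $\eqf$, hence receive the same color under $\psi'$, so $T[i..j]$ reads the same forwards and backwards. Next I would show each such palindrome is \emph{maximal} in $T$: this is exactly where the edge set enters. If $(i,j)\in\fp$ then by \cref{d:rsg} there is an edge between the component containing $i-1$ and the component containing $j+1$ (instantiating the formal edge condition with the pair $(i-1)+1=i$, $(j+1)-1=j$), so $\psi$ properly colors them with distinct colors, giving $T[i-1]\ne T[j+1]$ and preventing extension. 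The delicate point is the \emph{converse} containment: I must argue that $T$ has \emph{no palindromes beyond} those in $\fp$ --- otherwise the fingerprint of $T$ would strictly contain $\fp$. I would handle this by a counting or center-by-center argument: there are exactly $2n-1$ centers, each contributes exactly one maximal palindrome, and I would show that for each center the maximal palindrome in $T$ has the radius prescribed by $\fp$, using the fact that the next mirror pair after the prescribed radius is forced unequal by an edge while all pairs inside are forced equal.

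For the reverse direction, given a string $T$ whose fingerprint is $\fp$, I define $\psi(A_k):=T[i]$ for some $i\in A_k$. First this must be \emph{well-defined}: any two indices in the same component $A_k$ are connected by a chain of palindromic dependencies, and each dependency forces equality in every reconstruction (the parenthetical in \cref{d:eqg}, which is the content of the \cref{d:eqg} observation), so $T$ assigns all of $A_k$ the same symbol and the choice of representative is immaterial. Then I verify $\psi$ respects every edge: if $(A_{k_1},A_{k_2})\in E$ then by \cref{o:neqres} every $i\in A_{k_1}$ and $j\in A_{k_2}$ satisfy $T[i]\ne T[j]$, so in particular $\psi(A_{k_1})=T[i]\ne T[j]=\psi(A_{k_2})$, making $\psi$ proper.

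The main obstacle I anticipate is the "no extra palindromes" half of the forward direction: it is routine to check that every prescribed palindrome survives in $T$, but ruling out \emph{spurious} longer or unexpected palindromes requires genuinely using that $\fp$ came from a valid Manacher array and that the edges encode \emph{all} forced inequalities, not merely some of them. I expect the cleanest route is to show that the radius of the maximal palindrome at each center in $T$ equals the value recorded by $\fp$ exactly, which pins down the entire fingerprint; the equality constraints guarantee the radius is \emph{at least} the prescribed value, and the edge constraints guarantee it is \emph{at most} the prescribed value, squeezing it to equality.
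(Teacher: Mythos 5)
Your proposal follows essentially the same route as the paper's proof: equality inside the connected components of $\eqf$ guarantees every prescribed palindrome survives in $T$, the edges of $G(\fp)$ force the inequality $T[i]\neq T[j]$ at each extension point so maximality is preserved, and the converse direction uses well-definedness on components together with \cref{o:neqres}. Your explicit center-by-center squeeze (radius at least the prescribed value by the equality constraints, at most by the edge constraints) is in fact more careful than the paper, which disposes of the ``no spurious palindromes'' issue with the single assertion that every maximal palindrome remains unchanged, so your treatment of that step would be a strict improvement rather than a deviation.
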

\begin{proof}\label{ap:pr:ctos}
    We establish a bijection between colorings of $G(F)$ and strings with fingerprint $F$.  
    
    \textbf{(Coloring to String)}  
    First, due to~\cref{ob:all-par}, every index $1 \le i\le n$ has a unique set $A_k$ such that $i\in A_k$ and $A_k\in V$, so $\psi'$ is well defined on all values of $i$.

    Consider the resulting string:
    \[
    T = \psi'(1) \psi'(2) \dots \psi'(n)
    \]
    The string $T$ is well defined. Consider two palindromically dependent indices $i,j$. Since $i,j$ are palindromically dependent, they belong to the same connected component in the equality graph of $\fp$, and thus, to the same vertex in the restriction graph. Therefore, by our definition of $\psi'$, we know that $\psi'(i)=\psi'(j)$, and therefore, $T[i]=T[j]$.

    Now, assume two indices $i,j$ must not be equal. Such a restriction can be imposed by a maximal palindrome $(i+1,j-1)\in \fp$. However, by the definition of $G$, such a restriction implies that the vertex containing $i$ and the vertex containing $j$ are connected by a vertex, and therefore $\psi'(i)\neq \psi'(j)$, and $T[i]\neq T[j]$.
    
    Overall, we guaranteed that every maximal palindrome remains unchanged in $T$, preserving $\fp(T) = \fp$.
    
    \textbf{(String to Coloring)}  
    Conversely, let $T$ be a valid reconstruction of $\fp$. Define $\psi: V \to \Sigma$ as  
    \[
    \psi(A_k):=T[i] \quad \text{For some } i\in A_k
    \]

    First, since the sets $(A_k)$ are the connected components of the restriction graph, $\forall i,j\in A_k\quad T[i]=T[j]$. Therefore, the function $\psi$ is unique.

    By Observation~\ref{o:neqres}, any two connected components in the restriction graph are assigned distinct symbols in $T$, making $\psi$ a valid coloring for $G$.
    
    Thus, the mappings are inverses, establishing the bijection. \qed

\end{proof}


And the latter proof completes the proof for~\cref{thm:graph-reconstruction}. \qed

        
        


The following is a detailed example of the restriction graph and its construction.
\begin{example}[Restriction graph - ~\cref{d:rsg}]\label{e:rsg}

    Consider the string $S=\texttt{41213121566757}$.

    The maximal palindromes in the string are: 
    \[\{(2,4),(2,8),(6,8),(10,11),(12,14)\}\]
    Trivial palindromes of length one and zero are discarded from the set.

    The equality graph, $G_{=}$, is a graph where each index in the original string is assigned to a node. In the graph, we only connect nodes that ought to be connected by a palindromic restriction. The equality graph is in~\cref{f:eqg}.

    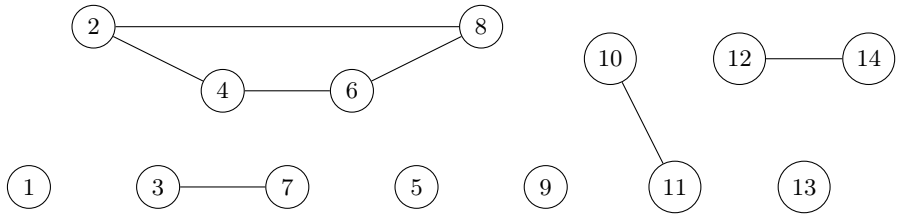
\begin{figure}[ht]
    \caption{The equality graph of $S$ }

    \begin{tikzpicture}[scale=0.85, node distance=1.5cm, every node/.style={draw, circle}]
    \node (v1) at (0,0) {1};

    \node (v2) at (1,2.5) {2};
    \node (v4) at (3,1.5) {4};
    \node (v6) at (5,1.5) {6};
    \node (v8) at (7,2.5) {8};

    \node (v3) at (2,0) {3};
    \node (v7) at (4,0) {7};

    \node (v5) at (6,0) {5};
    \node (v9) at (8,0) {9};
    \node (v10) at (9,2) {10};
    \node (v11) at (10,0) {11};
    \node (v12) at (11,2) {12};
    \node (v13) at (12,0) {13};
    \node (v14) at (13,2) {14};
    
    \draw (v2) -- (v4);
    \draw (v4) -- (v6);
    \draw (v6) -- (v8);
    \draw (v2) -- (v8);
    \draw (v7) -- (v3);
    \draw (v12) -- (v14);
    \draw (v10) -- (v11);
    \end{tikzpicture}

    \label{f:eqg}
    \end{figure}

    The restriction graph $G$ shrinks all connected components to only one node. We labeled the nodes with their original indices, separated by commas. In the restriction graph, we draw an edge between two nodes that ought to be reconstructed using another label. Note that although we omitted trivial palindromes in the fingerprint, their existence can be inferred from the lack of other palindromes in the given center. When considering the restriction graph, we also consider restrictions of trivial palindromes. For example, we can see that there is no palindrome centered at 10, which means that the palindrome centered at 10 is trivial, hence $S[9] \neq S[11]$. The restriction graph is in~\cref{f:restg}.

\begin{figure}[ht]
    \centering

    \begin{subfigure}{0.45\textwidth}
        \centering
        \caption{The restriction graph $G$ of $S$. The biggest clique is highlighted in red.}
        \begin{tikzpicture}[node distance=1.5cm, every node/.style={draw, circle, minimum size=1cm, inner sep=0pt}]
            \foreach \i/\j/\label in {1/1/1, 3/2/{3,7}, 2/3/{2,4,6,8}, 10/4/{10,11}, 5/5/5, 12/6/{12,14}, 9/7/9, 13/8/13} {
                \node (v\i) at ({360/8 * (\j - 1)}:2.75) {\label};
            }
            
            \foreach \u/\v in {1/3, 1/2, 1/5, 1/9, 3/2, 3/5, 3/9, 2/5, 2/9, 5/9} {
                \draw[red] (v\u) -- (v\v);
            }
            
            \foreach \u/\v in {12/13, 12/10, 10/13, 9/10, 10/2} {
                \draw (v\u) -- (v\v);
            }
        \end{tikzpicture}
        \label{f:restg}
    \end{subfigure}
    \hfill
    \begin{subfigure}{0.45\textwidth}
        \centering
        \caption{Optimal coloring of $G$ with 5 colors.}
        \begin{tikzpicture}[node distance=1.5cm, every node/.style={draw, circle, minimum size=1cm, inner sep=0pt, very thick}]
            \tikzmath{\r = 2.75;}
            \node[draw=blue] (v1) at ({360/8 * (1 - 1)}:\r) {1};
            \node[draw=red] (v3) at ({360/8 * (2 - 1)}:\r) {3,7};
            \node[draw=green] (v2) at ({360/8 * (3 - 1)}:\r) {2,4,6,8};
            \node[draw=purple] (v10) at ({360/8 * (4 - 1)}:\r) {10,11};
            \node[draw=purple] (v5) at ({360/8 * (5 - 1)}:\r) {5};
            \node[draw=red] (v12) at ({360/8 * (6 - 1)}:\r) {12,14};
            \node[draw=magenta] (v9) at ({360/8 * (7 - 1)}:\r) {9};
            \node[draw=blue] (v13) at ({360/8 * (8 - 1)}:\r) {13};
            
            \foreach \u/\v in {1/3, 1/2, 1/5, 1/9, 3/2, 3/5, 3/9, 2/5, 2/9, 5/9} {
                \draw (v\u) -- (v\v);
            }
            
            \foreach \u/\v in {12/13, 12/10, 10/13, 9/10, 10/2} {
                \draw (v\u) -- (v\v);
            }
        \end{tikzpicture}
        \label{f:optc}
    \end{subfigure}

     \caption{Illustration of the restriction graph and its optimal coloring.}
\end{figure}

    Our last step towards reconstruction is coloring the graph. We can see that the subgraph with nodes $\{(1), (2,4,6,8), (3,7), (5), (9)\}$ is the complete graph $K_5$, hence there is no coloring with less than five colors. Also, there are only eight nodes, which implies that there is no coloring with more than eight colors.

    Our original string $S$, is a valid coloring using seven colors. \\
    The string $S=\texttt{41213121566787}$ is the naïve coloring that assigns each node a different color. The string $S=\texttt{45253525133212}$ represents an optimal coloring of the graph, e.g., coloring with the fewest possible colors. The coloring is presented in~\cref{f:optc}.
    
\end{example}

\section{Alphabet Size Bounds}\label{sec:alph}

Logarithmic bounds on the alphabet size in palindromic-equivalent structures are well established. However, a finer combinatorial analysis reveals even more specific structural constraints. In this section, we present results that clarify the relationship between a fingerprint's alphabet size and its underlying combinatorial structure.

\begin{theorem}\label{thm:minimal-reconstruction}
    Let $\man$ be a Manacher array, and let $S$ be a lexicographically minimal string with the Manacher array $\man$, i.e., a {\em reconstruction} of $\man$. Assume the alphabet of $S$ is $\Sigma=\{\sigma_1,\sigma_2,\dots\}$, and that the lexicographic ordering is $\sigma_i<\sigma_{i+1}$ for every $i$.

    \begin{enumerate}
    \item  The first occurrence of a character \( \sigma_i,\; i\ge 3 \) is preceded by a substring that follows the pattern \( \ZP{i-2} \).  
    \item  Any subsequent occurrences of a character \( \sigma_i \) (where \( i \geq 3 \)) are either:
        \begin{itemize}
           \item  At a palindromically dependent index (recall~\cref{def:pal-dep}), or  
           \item  Preceded by a substring that follows the pattern \( \ZP{i-2} \).  
        \end{itemize}
        \item  The string \( S \) is constructed using a globally minimal alphabet.
    \end{enumerate}
\end{theorem}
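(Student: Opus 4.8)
The plan is to analyze the lexicographically minimal reconstruction produced by a greedy left-to-right assignment and show that the three claims follow from a careful accounting of \emph{when} a genuinely new symbol is forced. The key observation is that, since $S$ is lexicographically minimal, at each position $i$ the algorithm assigns the smallest symbol $\sigma$ that is not excluded by any already-imposed restriction; a genuinely new symbol $\sigma_i$ (with $i\ge 3$) is introduced at position $p$ only when every smaller symbol $\sigma_1,\dots,\sigma_{i-1}$ is forbidden at $p$. Each such prohibition arises, via the restriction graph of \cref{sec:graph} and \cref{o:neqres}, from a maximal palindrome whose extension forces $S[p]$ to differ from some earlier position carrying that smaller symbol. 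My strategy is to induct on $i$ and show that having $i-1$ mutually-conflicting constraints at a single position forces a nested palindromic structure, which is exactly a Palindromic Zimin word $\ZP{i-2}$.

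First I would make precise the forcing mechanism: by \cref{o:neqres} and \cref{d:rsg}, the inequality $S[p]\neq S[q]$ is imposed precisely when $(q+1,p-1)$ (or symmetrically $(p+1,q-1)$) is a maximal palindrome in $\fp$, so a prohibition on $\sigma_j$ at position $p$ corresponds to a maximal palindrome whose boundary pairs $p$ with an earlier occurrence of $\sigma_j$. I would then prove, by induction on $i$, the core structural claim: \emph{if $i-1$ distinct smaller symbols are all forbidden at position $p$, then $S[\,\cdot\,p-1]$ ends with a substring matching $\ZP{i-2}$.} The base case $i=3$ gives two conflicting earlier symbols, yielding two nested palindromes centered so as to realize $\ZP{1}$. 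For the inductive step, I would argue that the maximal palindrome responsible for excluding the largest forbidden symbol $\sigma_{i-1}$ at $p$ wraps around a shorter excluded configuration; by \cref{p:pal-per} this shorter configuration itself sits as a palindromic suffix, and by induction it matches $\ZP{i-3}$, so the whole suffix has the form $\ZP{i-3}\,P\,\ZP{i-3}$ for a palindrome $P$ distinct from the inner connectors—precisely the recursive definition of $\ZP{i-2}$. This proves claim~1 at the first occurrence and, applied verbatim, claim~2 for any later occurrence that is not palindromically dependent (a palindromically dependent index reuses a symbol by the equality relation of \cref{def:pal-dep}, so introduces no new constraint).

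For claim~3 (global minimality of the alphabet), I would combine the graph framework with the forcing analysis. By \cref{thm:graph-reconstruction} the minimum number of symbols over \emph{all} reconstructions equals $\chi(G(\man))$, so it suffices to show that the greedy lexicographically minimal string uses exactly $\chi(G(\man))$ colors. I would show that whenever the algorithm is forced to open symbol $\sigma_i$ at some position, the $i-1$ positions realizing the conflicting smaller symbols, together with $p$, map to $i$ pairwise-adjacent vertices of the restriction graph $G$—i.e.\ a clique of size $i$—because each pair is separated by a maximal palindrome and hence joined by an edge in $E$ (\cref{d:rsg}). Thus the maximal symbol index used by the greedy string equals the size of the largest clique it encounters, which is a lower bound on $\chi(G)$; since any reconstruction needs at least $\chi(G)$ symbols, the greedy string is globally minimal.

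The main obstacle I anticipate is the inductive step of the structural claim: showing that the several conflicting palindromes forcing a new symbol really do \emph{nest} into the recursive $\ZP{k}$ shape rather than merely overlapping arbitrarily. The delicate point is verifying that the connector palindromes $P_i$ at each level are forced to be pairwise distinct (as the definition of the Palindromic Zimin word demands), and that lexicographic minimality—not just \emph{some} reconstruction—guarantees the \emph{smallest} excluded symbol is the one sitting innermost, so that the nesting aligns with the symbol ordering $\sigma_1<\sigma_2<\cdots$. Handling the even/odd center parity of the relevant palindromes and ensuring the $\ZP{i-2}$ pattern appears as a genuine \emph{suffix} ending just before position $p$ (via \cref{p:pal-per}) is where the bookkeeping will be heaviest, and I would isolate it as a separate lemma before assembling the three claims.
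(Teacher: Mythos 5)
Your skeleton for parts 1 and 2 matches the paper's (lexicographic minimality forces, for each smaller symbol $\sigma_j$, a palindromic suffix witness preceded by $\sigma_j$; the witnesses nest; nesting gives $\ZP{i-2}$), but it stops exactly where the real work begins, and that is a genuine gap. The paper's key step (\cref{l:expo}) is quantitative: taking each witness $p_j$ to be the \emph{shortest} palindromic suffix of $S[1..m-1]$ preceded by $\sigma_j$, it proves that any two distinct witnesses satisfy $2|p_j| < |p_{j'}|$. This anti-overlap bound is precisely what makes the shorter witness occur as a \emph{non-overlapping} prefix and suffix of the longer one, which is what the recursion $\ZP{k}=\ZP{k-1}P_k\ZP{k-1}$ requires. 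The bound is proved by a periodicity argument: if $2|p_j|\ge|p_{j'}|$, then by \cref{p:pal-per} the longer witness is periodic with period $|p_{j'}|-|p_j|$, one writes the relevant palindrome as $(q_0q_1)^tq_0$ with $q_0,q_1$ palindromes (\cref{o:per-pal-struc}), and concludes that $\sigma_j$ already precedes a strictly shorter palindromic suffix, contradicting the minimality of $p_j$. Your inductive step replaces this with the phrase ``wraps around a shorter excluded configuration,'' and you yourself flag the nesting as the main anticipated obstacle---but anticipating it is not resolving it: without the doubling bound, witnesses such as $aba$ and $ababa$ could a priori overlap and your construction of $\ZP{i-3}P\,\ZP{i-3}$ does not go through. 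A secondary flaw: you peel off the palindrome excluding the \emph{largest} forbidden symbol $\sigma_{i-1}$, tacitly assuming the symbol order agrees with the nesting order of the witnesses. The paper never needs this alignment; it sorts the witnesses by length, proves the doubling bound for consecutive lengths, and concludes that the longest witness---whichever symbol it belongs to---matches $\ZP{i-2}$.

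For part 3 your route is genuinely different from the paper's, and with repair it could work. The paper cites a transfer lemma (\cref{l:tomo-pal}): any string $S'$ with the same Manacher array replicates the ``equal symbols before palindromic suffixes'' structure of $S$, so a smaller alphabet in $S'$ forces, by pigeonhole, two of the $i-1$ witnesses to be preceded by equal symbols in $S'$, hence also in $S$---a contradiction. You instead exhibit a clique of size $i$ in the restriction graph, giving $\chi(G)\ge i$, and finish via \cref{thm:graph-reconstruction}; this is more self-contained (no external lemma) and is a nice observation. But as stated it has a hole: two witness positions $q_j,q_{j'}$ are \emph{not} in general separated by a maximal palindrome, so \cref{d:rsg} yields no edge between their components directly. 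You must first reflect $q_j$ through the center of the longer palindrome $p_{j'}$: the mirrored position $q_j^{\ast}$ lies in the same equality component as $q_j$ (they are palindromically dependent, \cref{def:pal-dep}), and $S[q_{j'}+1..q_j^{\ast}-1]$ is an occurrence of $p_j$ that \emph{is} maximal, being flanked by $\sigma_{j'}\neq\sigma_j$; this gives the required edge. With that repair, and with the doubling lemma restored for parts 1 and 2, your outline would become a complete proof.
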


The above theorem implies the following:

\begin{corollary}[Alphabet Size]\label{cor:alph-size}
    Let $\alpha:\mathbb{N} \to \mathbb{N}$ be the function such that $2\alpha(k) - 1$ is the minimal length of a Manacher array that cannot be realized by any string over an alphabet of fewer than $k$ distinct symbols.
    
    Then,
    \[
    \alpha(k) = 
    \begin{cases} 
        1, & \text{if } k = 1, \\ 
        2^{k-2} + 1, & \text{if } k \ge 2.
    \end{cases}
    \]
\end{corollary}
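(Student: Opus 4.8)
The plan is to read the statement through the string/graph dictionary of the previous sections and then prove matching upper and lower bounds on $\alpha(k)$. Since a string of length $n$ has a Manacher array (equivalently a fingerprint $\fp$) of length $2n-1$, the quantity $\alpha(k)$ is exactly the least $n$ for which some length-$n$ fingerprint admits no reconstruction over fewer than $k$ symbols. By \cref{thm:graph-reconstruction} the minimal alphabet of $\fp$ equals $\chi(G(\fp))$, and by \cref{thm:minimal-reconstruction}(3) the lexicographically minimal reconstruction already attains this minimum; hence $\alpha(k)$ is the smallest $n$ admitting a length-$n$ fingerprint with $\chi(G(\fp))\ge k$. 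The cases $k=1,2$ are immediate ($n=1$ forces one symbol; the all-trivial fingerprint of length $2$ forces two), so I will focus on $k\ge 3$ and aim for $\alpha(k)=2^{k-2}+1$, which I will obtain from the recurrence $\alpha(k)=2\alpha(k-1)-1$ (equivalently $\alpha(k)-1=2(\alpha(k-1)-1)$ with $\alpha(2)-1=1$).

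For the upper bound I would exhibit an explicit witness of length $2^{k-2}+1$. Let $B$ be a concrete word following the pattern $\ZP{k-2}$ in which the base palindrome and every separator is a single, distinct fresh symbol; then $B$ is the order-$(k-2)$ ABACABA word, it uses $k-2$ symbols, and $|B|=2^{k-2}-1$. Define $W_k=x\,B\,y$ with $x,y$ two further distinct fresh symbols, so $|W_k|=2^{k-2}+1$. I would then show that the restriction graph $G(\fp(W_k))$ is the complete graph $K_k$: the nested maximal palindromes inside $B$ merge its positions into $k-2$ classes that are pairwise constrained (a $K_{k-2}$, proved by induction on the ABACABA recursion), while the flanking letters $x$ and $y$ each sit immediately outside a maximal palindrome spanning every other class and are therefore adjacent to all of them and to one another. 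By \cref{thm:graph-reconstruction} and $\chi(K_k)=k$, the word $W_k$ needs exactly $k$ symbols, giving $\alpha(k)\le 2^{k-2}+1$.

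For the lower bound, take any length-$n$ fingerprint needing $k\ge 3$ symbols and let $S$ be its lex-minimal reconstruction (\cref{thm:minimal-reconstruction}(3)), with symbols $\sigma_1<\dots<\sigma_k$. By \cref{thm:minimal-reconstruction}(1) the first occurrence of $\sigma_k$, say at position $p$, is immediately preceded by a substring $Q$ following $\ZP{k-2}$, so $|Q|\ge 2^{k-2}-1$. The crucial point is that $Q$ is a maximal palindrome whose left end does not touch position $1$: its right extension is blocked because $\sigma_k$ is a first occurrence, while the letter immediately to the left of $Q$ is precisely the class that, together with the $k-2$ classes inside $Q$ and with $\sigma_k$, completes the $K_k$ forcing the $k$-th symbol. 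Hence $p\ge 1+|Q|+1\ge 2^{k-2}+1$, so $n\ge p\ge 2^{k-2}+1$, matching the construction.

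The main obstacle is exactly this boundary ``$+1$'' in the lower bound: without the leading letter the nested pattern only yields $K_{k-1}$. Concretely, the order-$3$ witness $\texttt{bcbdbcb}$ followed by a new letter (total length $8$) is reconstructible with only $4$ symbols, whereas prefixing one more letter, giving $\texttt{abcbdbcbe}$ of length $9=2^{3}+1$, is what promotes the clique from $K_4$ to $K_5$. Making this rigorous requires showing that the $\ZP{k-2}$ pattern supplied by \cref{thm:minimal-reconstruction}(1) is genuinely a \emph{maximal} palindrome flanked on the left, rather than one beginning at position $1$; I expect this to follow from the lex-minimality of $S$ together with the interleaving/maximality bookkeeping already used to build $G(\fp)$, but it is the step that must be handled with care. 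The companion task on the upper-bound side---verifying that the two flanking letters are adjacent to every class of the inner $K_{k-2}$---is routine by comparison, reducing to checking that the required maximal palindromes $(i+1,j-1)$ lie in $\fp(W_k)$.
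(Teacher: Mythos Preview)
Your approach is correct and is essentially the paper's own argument. The paper dispatches the corollary in one line (``plugging the shortest string that matches $\ZP{k}$ into the theorem'') with the witness $S_k=\sigma_{k-1}P_{k-2}\sigma_k$ made explicit in the example; your $W_k=xBy$ is this same string, and your restriction-graph verification that $G(\fp(W_k))\cong K_k$ is a perfectly acceptable (and arguably more transparent) way to certify the upper bound.

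The obstacle you flag for the lower bound---that the $\ZP{k-2}$ block must be flanked on the \emph{left} to account for the extra ``$+1$''---is already resolved by \cref{o:pal-set}, which gives the stronger conclusion that for each $j<i$ the string $\sigma_j p_j \sigma_i$ (not merely $p_j\sigma_i$) is a suffix of $S[1..m]$. Combined with \cref{l:expo}, the longest such $p_j$ has length at least $2^{k-2}-1$, so $m\ge |\sigma_j p_j \sigma_k| \ge 2^{k-2}+1$. Your instinct that the flanking letter comes from lex-minimality is exactly right: \cref{o:pal-set} is stated precisely as a reformulation of lex-minimality. So rather than re-deriving it, cite \cref{o:pal-set} and \cref{l:expo} directly instead of the bare statement of \cref{thm:minimal-reconstruction}(1).
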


The corollary is achieved by plugging the shortest string that matches $\ZP{k}$ into the theorem. An example of the corollary can be found at~\cref{ap:ex:alph}.\\
We acknowledge that~\cite{tomohiro:10} claims their algorithm yields a minimal alphabet; however, this claim is not rigorously proven in their paper.

\begin{example}[Tight alphabet size example for~\cref{cor:alph-size}]\label{ap:ex:alph}
    We now demonstrate the shortest string that follows the pattern $\ZP{k}$. Since $\ZP{k}$ is a palindrome of the form $\ZP{k-1}p_k\ZP{k-1}$, and $p_k$ is an arbitrary palindrome, we are going to choose all palindromes $p_1,p_2,\dots,p_k$ to be of length 1. Therefore, we denote $\sigma_i=p_i[1]=p_i$. Recall that $i\neq j\to p_i\neq p_j$.

    Therefore, let $P_k$ be the minimal string that follows $\ZP{k}$, then:
    \[
    P_1=1\quad P_2=121 \quad P_3=1213121 \quad P_k=P_{k-1}P_kP_{k-1}
    \]

    And by~\cref{thm:minimal-reconstruction}, the minimal Manacher array that requires at least $k$ characters to reconstruct, is the Manacher array of:
    \[
    S_k=\sigma_{k-1}P_{k-2}\sigma_{k}
    \]

    And by setting $k=5$, we obtain:
    \[
    S_5=412131215
    \]
    And $|S_5|=9=2^{5-2}+1$, as required.
\end{example}

\noindent \textbf{Proof of~\cref{thm:minimal-reconstruction}} Throughout the proof, let $\man$ denote the given Manacher array, and let $S$ be the lexicographically minimal string corresponding to $\man$. We assume that $S$ contains more than three distinct characters, i.e., $|\Sigma_S| \geq 4$, where $\Sigma_S$ represents the alphabet of $S$. We fix $k$ to be the alphabet size, i.e., $k = |\Sigma_S|$.

\begin{observation}\label{o:pal-set}
    Let $\sigma_i$ be a character at a palindromically-independent index $m$, and let $s_m=S[1..m]$. For every character $\sigma_j$ with $j<i$, there exists a palindrome $p_j$ such that $\sigma_jp_j\sigma_i$ is a suffix of $s_m$.
\end{observation}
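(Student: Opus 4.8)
The plan is to show that each smaller symbol $\sigma_j$ (with $j<i$) is \emph{blocked} at position $m$ by a palindrome ending one position earlier, and then to read off the claimed suffix directly. I would organize the argument around the palindromic suffixes of the prefix $S[1..m-1]$ together with the lexicographic minimality of $S$.

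First I would exploit the hypothesis that $m$ is palindromically independent. By \cref{def:pal-dep}, no maximal palindrome of $S$ has its center strictly before $m$ while covering position $m$; in particular $s_m=S[1..m]$ has \emph{no} palindromic suffix of length $\ge 2$. Two consequences follow. (i) The value $S[m]=\sigma_i$ is subject to no equality constraint, so the only restrictions on the symbol placed at $m$ are inequalities. (ii) A palindromic suffix $S[t+1..m-1]$ of $S[1..m-1]$ is \emph{maximal} in $S$ precisely when $S[t]\ne S[m]$, and by (i) the factor $S[t..m]$ is never a palindrome, so $S[t]\ne S[m]$ holds for every such suffix; hence the maximal palindromes of $S$ ending at $m-1$ are exactly the palindromic suffixes of $S[1..m-1]$.

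Second I would characterize which symbols may legally occupy position $m$. Placing a symbol $c$ at $m$ creates a palindrome $S[a..m]$ (with $a<m$) exactly when $S[a+1..m-1]$ is a palindromic suffix of $S[1..m-1]$ and $S[a]=c$; by (ii) any such palindrome would extend a maximal palindrome beyond the length recorded by $\man$, so $c$ is \emph{admissible} at $m$ iff $c$ is not the left neighbour of any palindromic suffix of $S[1..m-1]$. The only delicate point here — and the step I expect to be the main obstacle — is that admissibility is genuinely \emph{local}: a symbol creating no forbidden palindrome ending at $m$ must still be extendable to a full reconstruction of $\man$. I would settle this by noting that palindromes of $\man$ whose reach includes positions $>m$ impose only equalities or inequalities between $S[m]$ and \emph{later} positions, all of which can be satisfied by assigning fresh symbols to the right; combined with the correctness of the greedy reconstruction, this shows local consistency at $m$ implies global extendability.

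Finally I would invoke lexicographic minimality. Since $S$ is the lexicographically least reconstruction and $S[m]=\sigma_i$, every symbol $\sigma_j$ with $j<i$ must be inadmissible at $m$; otherwise replacing position $m$ by $\sigma_j$ and completing freely would yield a lexicographically smaller reconstruction, a contradiction. By the characterization above, each such $\sigma_j$ is therefore the left neighbour of some palindromic suffix $p_j=S[t_j+1..m-1]$ of $S[1..m-1]$ with $S[t_j]=\sigma_j$. Reading the factor $S[t_j..m]=S[t_j]\,S[t_j+1..m-1]\,S[m]$ then exhibits $\sigma_j\,p_j\,\sigma_i$ as a suffix of $s_m$, with $p_j$ a (possibly empty) palindrome, which is exactly the claim.
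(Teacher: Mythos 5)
Your proof is correct and takes essentially the route the paper intends: the paper gives no argument at all for this observation, remarking only that it ``is an alternative formulation of the requirement that the string $S$ is lexicographically minimal,'' and your write-up is exactly that reformulation made rigorous --- palindromic independence of $m$ reduces the constraints at $m$ to inequalities against left neighbours of palindromic suffixes of $S[1..m-1]$, and any unblocked $\sigma_j<\sigma_i$ would complete to a lexicographically smaller reconstruction. The one subtle step you flag, that local admissibility at $m$ implies global extendability, is precisely the correctness guarantee of the greedy left-to-right reconstruction of I et al.~\cite{tomohiro:10}, which the paper itself leans on implicitly (e.g.\ in the proof of minimality of the alphabet), so your appeal to it is legitimate and fills a gap the paper leaves unstated.
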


This observation is an alternative formulation of the requirement that the string $S$ is lexicographically minimal.

\begin{observation}\label{o:per-pal-struc}
    Let $P$ be a periodic palindrome. $P$ can be written as $(q_0q_1)^iq_0$, where $q_0$ and $q_1$ are palindromes and $i\geq 2$.
\end{observation}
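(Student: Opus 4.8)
The plan is to read the desired decomposition straight off the periodicity of $P$ and then check that the two blocks it produces are palindromes. Since $P$ is periodic, I would invoke the definition of periodicity to write $P = p^m p'$ with $m \ge 2$ and $p'$ a prefix of $p$; set $d := |p|$, $n := |P| = md + r$ with $r := |p'|$, where we may assume $0 \le r < d$ (absorbing a full copy into $p^{m+1}$ otherwise). Note $m \ge 2$ forces $d \le n/2$. I then split the period block as $p = q_0 q_1$, taking $q_0 := p' = P[1..r]$ and $q_1 := P[r+1..d]$. With this choice the factorization $P = (q_0 q_1)^m q_0$ holds by construction and $i := m \ge 2$, so the entire statement reduces to proving that $q_0$ and $q_1$ are palindromes (allowing the empty palindrome when $r = 0$).

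For $q_0$: the trailing copy $p'$ in $P = p^m p'$ shows that the length-$r$ suffix of $P$ equals $q_0$, while the palindrome property of $P$ forces that same suffix to equal the reverse of the length-$r$ prefix, namely $\overline{q_0}$. Comparing the two descriptions gives $q_0 = \overline{q_0}$, so $q_0$ is a palindrome.

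For $q_1$: I would compute the length-$d$ suffix of $P$ in two ways. Using the period and shifting the window down by $(m-1)d$ identifies this suffix with $P[r+1..r+d]$, which is $q_1$ followed by the periodic repeat of the first $r$ symbols, namely $q_1 q_0$. Using the palindrome property, the length-$d$ suffix is the reverse of the length-$d$ prefix $p = q_0 q_1$, that is $\overline{q_1}\,\overline{q_0} = \overline{q_1}\, q_0$ (using that $q_0$ is a palindrome from the previous step). Equating $q_1 q_0 = \overline{q_1}\, q_0$ and cancelling the common suffix $q_0$ yields $q_1 = \overline{q_1}$, so $q_1$ is a palindrome as well. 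The folklore Lemma~\ref{p:pal-per} offers an alternative route through palindromic suffixes and periods, but the direct two-sided comparison above seems cleaner and self-contained.

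The main obstacle I anticipate is purely the index bookkeeping: making sure the shift by $(m-1)d$ stays within bounds (which is exactly where $m \ge 2$ is needed) and correctly handling the boundary case $r = 0$, where $q_0$ degenerates to the empty palindrome and the claim specializes to ``the period $p$ is itself a palindrome.'' No appeal to Fine--Wilf is required; everything follows from the periodicity decomposition together with the mirror symmetry $P[k] = P[n+1-k]$.
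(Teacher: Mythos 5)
Your proof is correct, and it is worth noting that it supplies something the paper itself does not: the paper never formally proves this observation, but merely illustrates it with \cref{fig:per-pal-struct}, whose subcaptions indicate the intended decomposition (take $q_0=\varepsilon$, $q_1=p$ when the period $p$ is itself a palindrome; otherwise split $p=q_0q_1$ so that, depending on the parity of the number of period blocks, either $q_0$ or $q_1$ straddles the center $c_P$ of $P$). Your construction produces the same factorization --- $q_0$ is the residual prefix $p'$ and $q_1$ the remainder of the period block --- but you justify palindromicity of the two pieces by end-to-end mirror comparisons rather than by the center-symmetry picture: the length-$r$ suffix of $P=p^m p'$ is simultaneously $q_0$ (by periodicity) and $\overline{q_0}$ (by the palindrome property), and the length-$d$ suffix is simultaneously $q_1 q_0$ and $\overline{q_1}\,q_0$, whence both pieces are palindromes after cancellation. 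This is self-contained, handles the degenerate case $r=0$ explicitly, and avoids any appeal to \cref{p:pal-per} or Fine--Wilf; the paper's center-based picture is perhaps more intuitive for seeing \emph{why} the blocks must be palindromes, but your argument is the one that actually constitutes a proof. Two small bookkeeping remarks: the bound you worry about for the shift by $(m-1)d$ holds automatically (the shifted window starts at position $r+1\ge 1$ for every $m\ge 1$), so $m\ge 2$ is needed only to secure the exponent $i\ge 2$ demanded by the statement, which you correctly inherit from the paper's definition of periodicity.
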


A demonstration of this observation can be found at Figure~\ref{fig:per-pal-struct}.

\begin{figure}[ht]
\centering

\newcommand{\rheight}{0.5}
\newcommand{\Alength}{14}
\newcommand{\qzlength}{1.0}
\newcommand{\qolength}{1.6}
\pgfmathsetmacro{\tlength}{\qolength+\qzlength}

\newcommand{\palper}[1]{
\pgfmathsetmacro{\n}{#1}
\pgfmathsetmacro{\nm}{#1-1}
\pgfmathsetmacro{\tsize}{\tlength*\n}

\node (ha) at (0,0) {};

\pgfmathsetmacro{\startX}{iseven(\n) ? (\n/2)*(\tlength) : ((\n+1)/2)*\tlength }
\pgfmathsetmacro{\centerLength}{iseven(\n) ? \tlength : 0}

\draw [decorate, orange, decoration = {calligraphic brace,mirror}] (\startX,-0.5*\rheight) -- ({\startX+\centerLength},-0.5*\rheight) node[midway,shift={(0,{-0.7*\rheight})},black] {$c_P$};

\node (ps) at (0,0) {};
\node (pe) at (2.6,0) {};

\foreach \x in {0,...,\n} {
    \draw ($(\x*\tlength,0)$) rectangle ++ (\tlength,\rheight) node[midway] {$p$};
}

\draw[white] (0,5*\rheight) -- (\Alength,5*\rheight);
}

\newcommand{\nonpalper}[1]{
\pgfmathsetmacro{\n}{#1}
\pgfmathsetmacro{\nm}{#1-1}
\pgfmathsetmacro{\tsize}{\tlength*\n}

\node (ha) at (0,0) {};

\pgfmathsetmacro{\startX}{iseven(\n) ? (\n/2)*(\tlength) : (\nm/2)*\tlength + \qzlength }
\pgfmathsetmacro{\centerLength}{iseven(\n) ? \qzlength : \qolength }

\draw [decorate, orange, decoration = {calligraphic brace,mirror}] (\startX,-0.5*\rheight) -- ({\startX+\centerLength},-0.5*\rheight) node[midway,shift={(0,{-0.7*\rheight})},black] {$c_P$};

\draw (\tsize,-0.5*\rheight) -- (\tsize,1.5*\rheight);

\node (ps) at (0,0) {};
\node (pe) at (2.6,0) {};

\node[below,text height=0.25cm]  at ($(ps)$) {$1$};
\foreach \x in {0,...,\nm} {
    \draw ($(\x*\tlength,0)$) rectangle ++ (\qzlength,\rheight) node[midway] {$q_0$};
    \draw ($(\x*\tlength+\qzlength,0)$) rectangle ++ (\qolength,\rheight) node[midway] {$q_1$};
}
\draw ($(\n*\tlength,0)$) rectangle ++ (\qzlength,\rheight) node[midway] {$q_0$};

\node[below,text height=0.25cm]  at ($(pe)$) {$|p|$};

\draw [decorate,
    decoration = {calligraphic brace,mirror}] ($(ps)+(0,-\rheight)$) --  ($(pe)+(0,-\rheight)$) node[midway,shift={(0,{-0.7*\rheight})}] {$p$};

\draw[white] (0,5*\rheight) -- (\Alength,5*\rheight);
}


\setlength{\belowcaptionskip}{1pt}

\begin{subfigure}{\textwidth}
\centering
\begin{tikzpicture}[scale=0.85, every node/.style={scale=0.7}]
\palper{3}
\end{tikzpicture}
\end{subfigure}

\vspace{-1.5cm} 

\begin{subfigure}{\textwidth}
\centering
\begin{tikzpicture}[scale=0.85, every node/.style={scale=0.7}]
\palper{4}

\end{tikzpicture}
\caption{The factor is palindromic. We choose $q_0=\varepsilon$ and $q_1=p$.}
\end{subfigure}

\vspace{-1.5cm} 

\begin{subfigure}{\textwidth}
\centering
\begin{tikzpicture}[scale=0.85, every node/.style={scale=0.7}]
\nonpalper{3}

\end{tikzpicture}
\caption{$p=q_0q_1$, $n=3$. $q_1$ is at the center.}
\end{subfigure}

\vspace{-1.5cm} 

\begin{subfigure}{\textwidth}
\centering
\begin{tikzpicture}[scale=0.85, every node/.style={scale=0.7}]
\nonpalper{4}
\end{tikzpicture}
\caption{$p=q_0q_1$, $n=4$. $q_0$ is at the center.}
\end{subfigure}

\caption{Demonstration of~\Cref{o:per-pal-struc}. $c_P$ is the center of the periodic palindrome.}
\label{fig:per-pal-struct}
\end{figure}


\begin{lemma}\label{l:expo}
    Let \( \sigma_i \) be a character at a palindromically independent index \( m \), and let \( s_m = S[1..m] \). Let \( \mathcal{P} = \{p_1, p_2, \ldots, p_{i-1}\} \) be the set of palindromes as defined in~\cref{o:pal-set}, where each $p\in \mathcal{P}$ is the shortest possible. For any distinct pair of palindromes \( p_j \) and \( p_{j'} \) from \( \mathcal{P} \), where \( |p_j| < |p_{j'}| \), it holds that \( 2|p_j| < |p_{j'}| \). The longest palindrome $p_j$ matches the pattern $\ZP{i-2}$.
\end{lemma}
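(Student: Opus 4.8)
The plan is to read \cref{o:pal-set} as a statement about palindromic suffixes of \(s_{m-1}=S[1..m-1]\): for a palindromic suffix \(p\) of \(s_{m-1}\) call \(S[m-1-|p|]\) its \emph{preceding symbol}, so that ``\(\sigma_j p_j\sigma_i\) is a suffix of \(s_m\)'' just says \(p_j\) is a palindromic suffix of \(s_{m-1}\) with preceding symbol \(\sigma_j\) and \(S[m]=\sigma_i\). Taking \(p_j\) shortest, each element of \(\mathcal P\) is the shortest palindromic suffix of \(s_{m-1}\) carrying a prescribed preceding symbol. Two preliminary facts set the stage: distinct elements have distinct lengths (two palindromic suffixes of equal length coincide, hence would share a preceding symbol), and any two elements are nested — the shorter is a suffix of the longer and, being a palindrome inside a palindrome, also a prefix. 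Thus I list \(\mathcal P\) as \(p_{(1)},\dots,p_{(i-1)}\) with strictly increasing lengths, each a palindromic suffix of the next.

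The heart of the argument is the doubling bound \(2|p_{(r)}|<|p_{(r+1)}|\), which I would prove by contradiction from \cref{p:pal-per} alone. Suppose \(p_a,p_b\in\mathcal P\) with \(|p_a|<|p_b|\le 2|p_a|\). Since \(p_a\) is a palindromic suffix of \(p_b\), \cref{p:pal-per} makes \(\pi:=|p_b|-|p_a|\) a period of \(p_b\); the hypothesis \(|p_b|\le 2|p_a|\) is exactly \(2\pi\le|p_b|\), so \(2\pi\) is again a period of \(p_b\), and \cref{p:pal-per} returns the suffix \(p'\) of \(p_b\) of length \(|p_b|-2\pi\) as a palindromic suffix of \(s_{m-1}\). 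Writing \(p_b=S[a..m-1]\), the symbols preceding \(p_a\) and \(p'\) sit at \(S[a+\pi-1]\) and \(S[a+2\pi-1]\), which are equal by the period \(\pi\); call this common value \(\tau\). Now \(\tau\) is the preceding symbol of \(p_a\), so by construction \(p_a\) is the shortest palindromic suffix of \(s_{m-1}\) preceded by \(\tau\); but \(p'\) is a palindromic suffix preceded by \(\tau\) and strictly shorter than \(p_a\), a contradiction. (The degenerate case \(|p_b|=2\pi\) gives \(p'=\varepsilon\), preceded by \(S[m-1]=\tau\), and the contradiction persists; \cref{o:per-pal-struc}, writing \(p_b=(q_0q_1)^tq_0\), yields the same fact through the trailing block \(q_0\).) Hence every pair more than doubles.

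For the Zimin conclusion I would bootstrap the doubling bound. For consecutive \(p_{(r)}\subsetneq p_{(r+1)}\), the palindrome \(p_{(r)}\) occurs as a prefix and as a suffix of \(p_{(r+1)}\), and \(2|p_{(r)}|<|p_{(r+1)}|\) makes the two occurrences disjoint, so \(p_{(r+1)}=p_{(r)}\,w_r\,p_{(r)}\) with \(w_r\) nonempty; matching \(p_{(r+1)}\) against its reverse forces \(w_r\) to be a palindrome. This is precisely the recursion \(\ZP{r}=\ZP{r-1}\,P_r\,\ZP{r-1}\). The middles are pairwise distinct: the last symbol of \(w_r\) is the symbol immediately preceding the trailing copy of \(p_{(r)}\), i.e.\ the preceding symbol of \(p_{(r)}\), and these are distinct across \(r\); distinct first symbols make the palindromes \(w_r\) distinct, as the pattern demands. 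With the base \(p_{(1)}=\varepsilon=\ZP{0}\) one gets \(p_{(r)}\) matching \(\ZP{r-1}\), so the longest element \(p_{(i-1)}\) matches \(\ZP{i-2}\).

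The step I expect to be the real obstacle is the base case \(p_{(1)}=\varepsilon\), equivalently \(S[m-1]<\sigma_i\), which is exactly what pins the degree to \(i-2\) rather than \(i-1\): the empty palindrome has preceding symbol \(S[m-1]\), so it lies in \(\mathcal P\) (and is then its shortest element) precisely when \(S[m-1]<\sigma_i\), and without it the recursion above would only reach \(\ZP{i-1}\). At the \emph{first} occurrence of \(\sigma_i\) this is immediate, since the lexicographically minimal reconstruction introduces symbols in increasing order, so \(\sigma_i\) is then the current maximal symbol and every earlier symbol, \(S[m-1]\) included, is strictly smaller. For an arbitrary palindromically independent occurrence the inequality is more delicate — \(S[m-1]\) can be forced large by a palindrome ending at \(m-1\) — and I would either reduce to the first-occurrence case or isolate the extra hypothesis guaranteeing \(S[m-1]<\sigma_i\), using that palindromic independence of \(m\) forces the maximal palindromic suffix of \(s_m\) to have length \(1\). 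Establishing this base case cleanly is where I expect the main work to lie.
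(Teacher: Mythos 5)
Your two substantive steps are correct, and the doubling bound is proved essentially the same way as in the paper: both arguments extract the period $\pi=|p_{j'}|-|p_j|$ from the nested palindromic suffixes via \cref{p:pal-per}, and then use periodicity to exhibit a palindromic suffix strictly shorter than $p_j$ that is preceded by the same symbol as $p_j$, contradicting minimality. The paper routes this through the decomposition $(q_0q_1)^tq_0$ of \cref{o:per-pal-struc} and treats $2|p_j|=|p_{j'}|$ as a separate case, while you iterate the period once more (to the suffix of length $|p_{j'}|-2\pi$) and transport the preceding symbol across one period; your variant is cleaner and handles the degenerate case uniformly. Your derivation of $p_{(r+1)}=p_{(r)}w_rp_{(r)}$ with nonempty palindromic middles, pairwise distinct because their last symbols are the (distinct) symbols $\sigma_j$, is also correct and considerably more careful than the paper's one-sentence ending.

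The gap is your final paragraph, and it is real: you leave the base case $p_{(1)}=\varepsilon$ (equivalently $S[m-1]<\sigma_i$) unproven, and the statement you propose to establish is in fact \emph{false} at subsequent palindromically independent occurrences, so that route cannot be completed. Concretely, take $S=12131214243$: it is lexicographically minimal for its Manacher array (greedy left-to-right reconstruction reproduces it; in particular position $10$ is forced to $4$ by the palindrome $424$ centered at position $9$), the index $m=11$ is a palindromically independent \emph{subsequent} occurrence of $\sigma_3=3$ (the first is at position $4$), yet $S[10]=4>\sigma_3$, and $\mathcal P=\{4,\,424\}$ contains no empty palindrome. This does not contradict your (correct) first-occurrence argument, but it kills the reduction you hoped for. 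The repair is to notice the base case is unnecessary: if $p_{(1)}\neq\varepsilon$, then $p_{(1)}$ is itself an instance of $\ZP{1}$, your recursion gives that $p_{(i-1)}$ matches $\ZP{i-1}$, and since following $\ZP{k}$ implies following $\ZP{k-1}$ (stated in the paper right after the definition), $p_{(i-1)}$ still follows $\ZP{i-2}$. This monotone reading is all the lemma needs and all that \cref{thm:minimal-reconstruction} and \cref{cor:alph-size} use downstream, since they only require the length bound $|p_{(i-1)}|\ge 2^{i-2}-1$, which your doubling inequality already yields for any base. So delete the attempted proof of $S[m-1]<\sigma_i$ and substitute this one-line monotonicity remark. (You are in good company: the paper's own proof simply asserts ``since $p_1=\varepsilon$'' without justification, and that assertion fails on the same example.)
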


\begin{proof}
Let us assume without loss of generality that the set $\mathcal{P}=\{p_1,p_2,\ldots,p_{i-1}\}$ is sorted, i.e., for every pair $p_j,p_{j+1}$ it holds that $|p_j|>|p_{j+1}|$.

Assume to the contrary there exists a triplet $(\sigma_i,p_j,p_{j+1})$ that contradicts the lemma, meaning:
\begin{enumerate}\label{temp:assum}
    \item $2|p_j|\ge |p_{j+1}|$
    \item $\sigma_jp_j\sigma_i$ is a suffix of $s_m$. The same condition would hold for $\sigma_{j+1}$ and $p_{j+1}$, respectively.
\end{enumerate}

First, let's rule out the possibility of $2|p_j|=|p_{j+1}|$. If $2|p_j|=|p_{j+1}|$, then $p_{j+1}[|p_j|]=p_{j+1}[|p_j|+1]=\sigma_j$, and by symmetry of $p_j$, the last character of $p_j$ also equals $\sigma_j$, and therefore the minimal palindrome $p_j$ is the empty string $\varepsilon$. However, we assumed $2|p_j|=|p_{j+1}|=0$, contradiction.

Let us now assume $2|p_j|>|p_{j+1}|$. When two palindromic suffixes are of similar length, the longer palindrome has a period of length at most $|p_{j+1}|-|p_j|$. Therefore, let $q$ be the minimal periodic factor, and rewrite $p_j$ as $(q_0q_1)^iq_0$~\cref{o:per-pal-struc}.

We know that:
\begin{enumerate}
    \item The character preceding $p_j$ in $S'$ is $\sigma_j$.
    \item $|p_{j+1}|\geq |p_{j}|+1$, and therefore $a_jp_j$ is a suffix of $p_{j+1}$, which is a periodic string with factor $q$.
\end{enumerate}
Therefore, $q_1[1]=\sigma_j$, and $p_j$ has a prefix $q_0\sigma_j$, and since a periodic factor must occur at least twice, $|q_0|<|p_j|$, contradicting the minimality of $p_j$. If $q_1$ is empty, it follows that $q_0[1]=a_j$ and $p_j$ can be replaced with the empty string $\varepsilon$. 

The string $p_j$ has $p_{j-1}$ as a non-overlapping prefix and suffix. Therefore, since $p_1=\varepsilon$, the string $p_{i-1}$ matches the pattern $\ZP{i-2}$. \qed

\end{proof}

Next, we need to prove that the lexicographically minimal string has the globally minimal alphabet size. To do so, let us cite an additional lemma:

\begin{lemma}[Proved in~\cite{nagashita_et_al:LIPIcs.CPM.2023.23}]\label{l:tomo-pal}
  Let $w$ and $u$ be strings that have the same Manacher array, and let $i$ and $j$ be integers satisfying $1 \le i < j \le |w|=|u|=n+1$. If $w[i + 1..n]$ and $w[j + 1..n]$ are palindromes and $w[i] = w[j]$, then $u[i + 1..n]$ and $u[j + 1..n]$ are palindromes and $u[i] = u[j]$.
\end{lemma}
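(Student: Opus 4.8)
The plan is to treat the two conclusions separately. The palindromicity conclusion is essentially free: whether $w[i+1..n]$ is a palindrome is equivalent to the maximal palindrome centred at $(i+1+n)/2$ having radius at least $(n-i-1)/2$, which is exactly what the Manacher array records at that centre. Since $w$ and $u$ share the array, $u[i+1..n]$ and $u[j+1..n]$ are palindromes as well. This also gives me, via \cref{p:pal-per}, that $d:=j-i$ is a period of $w[i+1..n]$ (and hence of $u[i+1..n]$), because $w[j+1..n]$ is a palindromic suffix of the palindrome $w[i+1..n]$ and the two differ in length by exactly $d$.

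The equality $u[i]=u[j]$ is the real content, and the key difficulty is that $i$ and $j$ are in general \emph{not} mirror images of each other under any single palindrome, so one reflection cannot force their equality; the intuition is that the equality is transmitted through a short chain of two array-certified palindromes, and the plan is to exhibit that chain explicitly. Write $c_1=(i+1+n)/2$. Reflecting position $j$ through $c_1$ inside the palindrome $w[i+1..n]$ (legal whenever $i+1\le j\le n$) yields $w[j]=w[n+1-d]$; combined with the hypothesis $w[i]=w[j]$ this gives $w[i]=w[n+1-d]$. Next, the mirror image of the palindromic suffix $w[j+1..n]$ under $c_1$ is a palindromic prefix $Q:=w[i+1..n-d]$ of $w[i+1..n]$. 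Flanking $Q$ by the two equal characters $w[i]$ and $w[n+1-d]$ then shows that $w[i..n+1-d]$ is itself a palindrome, centred at $c_3:=(i+n+1-d)/2$.

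Now I transfer and chain. Both $w[i+1..n]$ (centre $c_1$) and $w[i..n+1-d]$ (centre $c_3$) are genuine palindromes of $w$, so their radii are recorded in the shared Manacher array; therefore $u[i+1..n]$ and $u[i..n+1-d]$ are palindromes of $u$ as well. Reading off the corresponding reflections in $u$ --- position $j$ against $n+1-d$ through $c_1$, and the two endpoints $i$ and $n+1-d$ through $c_3$ --- gives $u[j]=u[n+1-d]$ and $u[i]=u[n+1-d]$, hence $u[i]=u[j]$, as required.

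Finally I would dispose of the degenerate case $j=n+1$, which falls outside the reflection argument because $j$ then lies beyond the palindrome $w[i+1..n]$. Here $w[j+1..n]=\varepsilon$ and $w[i]=w[j]=w[n+1]$ simply asserts that the maximal palindrome at $c_1$ reaches the final position $n+1$; this is again read directly from the Manacher array, so the same event holds for $u$ and gives $u[i]=u[n+1]=u[j]$. The main obstacle, as noted, is the middle paragraph: recognising that the auxiliary centre $c_3$ must be manufactured so that $i$ and $j$ become linked through the common position $n+1-d$ by two palindromes that the array already certifies.
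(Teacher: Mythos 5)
Your proof is correct, but note that the paper itself contains no proof of this statement at all: the lemma is imported as a black box from~\cite{nagashita_et_al:LIPIcs.CPM.2023.23}, as the attribution in the lemma header indicates. Your argument is therefore not an alternative to the paper's proof so much as a self-contained replacement for an external citation. The argument is sound: palindromicity of a fixed substring $w[a..b]$ is equivalent to the Manacher entry at centre $(a+b)/2$ being at least $(b-a)/2$, so palindromicity at fixed positions transfers between $w$ and $u$; your key move is to manufacture the auxiliary palindrome $w[i..n+1-d]$ by reflecting $w[j+1..n]$ through $c_1$ to obtain the palindromic prefix $Q=w[i+1..n-d]$ and flanking it with the equal characters $w[i]=w[n+1-d]$, after which the two array-certified palindromes $u[i+1..n]$ and $u[i..n+1-d]$ give $u[j]=u[n+1-d]$ and $u[i]=u[n+1-d]$, hence $u[i]=u[j]$. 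Your separate treatment of $j=n+1$ correctly closes the only case where the reflection through $c_1$ is illegal (and the subcase $j=n$, where $Q$ degenerates to $\varepsilon$ and the auxiliary palindrome to $w[i..i+1]$, still goes through). One cosmetic remark: the period observation via \cref{p:pal-per} in your first paragraph is never used afterwards and can be dropped. What your route buys is self-containedness — the paper's minimal-alphabet argument (\cref{thm:minimal-reconstruction}) rests on an external result, whereas your two-reflection chain derives the same fact from the definition of the Manacher array alone.
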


We can now conclude the proof.

\begin{proof}
    Let us assume that for an array $\man$ and a lexicographically minimal string $S$, there exists a string $S'$ with a strictly smaller alphabet. 

    Looking at the algorithm by~\cite{tomohiro:10} and the proof of~\cref{l:expo}, a new character is introduced to the reconstruction only if no existing character can satisfy the left-to-right reconstruction of the array $\man$. Therefore,  if $|{\Sigma}_S| > |{\Sigma}_{S'}|$ then there exists a minimal index $j$ such that $\sigma_i=S[j]$, and $\sigma_j\notin \Sigma_{S'}$. In~\cref{l:expo}, we proved that the new character $\sigma_i$ is preceded by $(i-1)$ maximal palindromic suffixes, each of them preceded by a different alphabet symbol. However, since a new character was not introduced at $S'[j]$, at least two palindromic suffixes were preceded by the same character, contradicting~\cref{l:tomo-pal}. \qed
\end{proof}

With that, \Cref{thm:minimal-reconstruction} is proved. \qed

The last lemma for this section combines~\cref{thm:minimal-reconstruction} and~\cref{thm:graph-reconstruction} to solve an open problem proposed by~\cite{tomohiro:10}.
\begin{theorem}[Open problem (3) in~\cite{tomohiro:10}]\label{lem:open}
    Given a set of maximal palindromes $\fp$ and a predefined parameter $k$, we can find a reconstruction of $\fp$ that contains exactly $k$ characters, if possible.
\end{theorem}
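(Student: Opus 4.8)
The plan is to combine the two reconstruction characterizations already established: the graph-theoretic framework of \Cref{thm:graph-reconstruction} and the minimal-alphabet analysis of \Cref{thm:minimal-reconstruction}. The key observation is that \Cref{thm:graph-reconstruction} gives us the full range of achievable alphabet sizes: every proper coloring of the restriction graph $G=G(\fp)$ with exactly $j$ colors corresponds to a reconstruction of $\fp$ using exactly $j$ distinct symbols, and conversely. Thus a reconstruction with exactly $k$ symbols exists if and only if $G$ admits a proper coloring using exactly $k$ colors, which holds precisely when $\chi(G) \le k \le |V|$, where $|V|$ is the number of connected components of the equality graph.

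First I would establish the two endpoints of this interval. The lower endpoint is the chromatic number $\chi(G)$; by \Cref{thm:minimal-reconstruction}(3), the lexicographically minimal reconstruction $S$ (computable by the algorithm of~\cite{tomohiro:10}) uses the globally minimal alphabet, so the number of distinct symbols in $S$ equals $\chi(G)$. This gives us $\chi(G)$ concretely without needing to solve graph coloring in general. The upper endpoint is $|V|$, realized by the naïve coloring that assigns a distinct color to every vertex (equivalently, every connected component of the equality graph gets its own symbol), as noted in the discussion following \Cref{d:rsg}.

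Next I would argue that every intermediate value is attainable: for any $k$ with $\chi(G) \le k \le |V|$, there is a proper coloring of $G$ using exactly $k$ colors. This follows from the standard fact that the set of achievable numbers of colors in a proper coloring of a graph forms a contiguous interval $[\chi(G), |V|]$ — one can start from an optimal $\chi(G)$-coloring and repeatedly split a color class (recoloring one vertex with a fresh color) to increase the count by one at each step until reaching $|V|$. By \Cref{thm:graph-reconstruction}, each such coloring yields a reconstruction with exactly that many symbols. Conversely, if $k < \chi(G)$ or $k > |V|$, no proper $k$-coloring exists, so no reconstruction with exactly $k$ symbols exists; this handles the ``if possible'' qualifier and gives a decision procedure.

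The main obstacle is efficiency: while the \emph{existence} question reduces cleanly to the interval test $\chi(G) \le k \le |V|$, actually \emph{constructing} the coloring requires producing a $\chi(G)$-coloring, and graph coloring is NP-hard in general (indeed \Cref{thm:graph-reconstruction}'s surrounding discussion notes the NP-hardness of reconstruction over arbitrary dependency sets). The saving grace here is that we do not color an arbitrary graph: the lexicographically minimal reconstruction from~\cite{tomohiro:10} already provides an optimal coloring in polynomial time, so we obtain a concrete $\chi(G)$-coloring for free, and the color-splitting step to reach any target $k \le |V|$ is trivially polynomial. I would therefore structure the proof so that the algorithmic content rests entirely on the already-cited linear-time reconstruction, and the step up to an arbitrary $k$ is a direct combinatorial recoloring, avoiding any appeal to general graph coloring hardness.
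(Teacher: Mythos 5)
Your proposal is correct and follows essentially the same route as the paper's own proof: run the algorithm of~\cite{tomohiro:10} to obtain a minimal-alphabet reconstruction (an optimal coloring of the restriction graph), check that $k$ lies between that minimum and the number of vertices, and then repeatedly recolor a vertex that shares its color with another vertex using a fresh symbol until exactly $k$ colors are used. Your write-up is somewhat more explicit about why the achievable alphabet sizes form the contiguous interval $[\chi(G), |V|]$ and about the impossibility direction, but the decomposition and the key steps are identical to the paper's argument.
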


The idea is to create an optimal coloring using the algorithm from~\cite{tomohiro:10}, update the coloring to include exactly $k$ colors, and conclude by making a string from the updated coloring. 

\begin{proof}\label{ap:pr:open}
    We begin by applying the reconstruction algorithm by~\cite{tomohiro:10}, to achieve an initial string $T$ with the given fingerprint $\fp$. Due to~\cref{thm:minimal-reconstruction}, the alphabet size of $T$ is the minimum possible. Using~\cref{thm:graph-reconstruction}, we construct the restriction graph $G$ of $\fp$.

    If the number $k$ is smaller than the alphabet size of $T$, or $k$ is greater than the number of vertices in the restriction graph, such a reconstruction is impossible.

    However, if $k$ is in range, color $G$ with $T$ to achieve an optimal coloring $\psi$. Find two vertices $v,u\in G$ that satisfy $\psi(v)=\psi(u)$, and change $\psi$ such that $\psi(u)$ is a new alphabet symbol, and repeat this process $k-|\Sigma_T|$ times. From the resulting coloring $\psi$, construct a string $T'$. The string $T'$ has the fingerprint $\fp$, and has exactly $k$ characters.
    \qed
\end{proof}

\section{Future Work}
Several natural questions remain open. Most notably, the exact number of valid Manacher arrays of length~\(n\) is unknown, and a direct combinatorial characterization of these arrays would significantly deepen our understanding. It is also unclear whether efficient uniform sampling of such arrays is possible. Finally, extending the reconstruction and combinatorial framework to approximate palindromes or palindromes containing wildcards remains a largely unexplored avenue.



\begin{credits}
\subsubsection{\ackname} 
I thank Professor Ron Adin for his guidance and valuable assistance with this problem.

This study was partially funded
by ISF grant No. 168/23.
\end{credits}

\newpage

\bibliography{paper}

\newpage
\appendix


\section{Omitted details}

\begin{definition}[Positive and negative dependencies]
    Let $\mathsf A$ be a set describing dependencies in the string.

    We refer to a dependency of the form:
    \[
    S[i..i+\ell-1]=S[j..j+\ell-1]
    \]
    as a {\em positive} dependency, and we refer to a dependency of the form:
    \[
    S[i..i+\ell-1]\neq S[j..j+\ell-1]
    \]
    as a {\em negative} dependency.

    The length of a dependency---positive or negative---is the length of the required match. In the examples above, the length is $\ell$.
\end{definition}

\begin{claim}\label{c:colnp}
    Reconstructing a string with minimal alphabet from a set of dependencies that contain only length-1 negative dependencies is NP-hard.
\end{claim}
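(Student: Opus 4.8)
The plan is to prove \textbf{NP}-hardness by a reduction from graph coloring, exploiting the fact that length-1 negative dependencies are exactly the edges of a graph on the index set $[n]$. Recall that deciding whether an arbitrary graph $G=(V,E)$ admits a proper $k$-coloring is \textbf{NP}-complete, and computing $\chi(G)$ is \textbf{NP}-hard. First I would take an arbitrary instance $G=(V,E)$ with $V=\{1,\dots,n\}$ and build a dependency set $\man$ over a string of length $n$ by declaring, for each edge $(u,v)\in E$, the length-1 negative dependency $S[u]\neq S[v]$. No positive dependencies are used. This construction is clearly polynomial-time (linear in $|E|$).

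The correctness argument is a direct equivalence: a string $T\in\Sigma^n$ satisfies all the dependencies in $\man$ if and only if the map $i\mapsto T[i]$ is a proper vertex coloring of $G$ with colors drawn from $\Sigma$. Indeed, the constraint $T[u]\neq T[v]$ for every edge $(u,v)$ is verbatim the condition for a proper coloring, and conversely any proper coloring $\psi:V\to\Sigma$ yields a valid string $T=\psi(1)\psi(2)\cdots\psi(n)$. Consequently, the minimal alphabet size over all strings satisfying $\man$ equals $\chi(G)$. Thus any algorithm that reconstructs a minimal-alphabet string from a length-1-negative dependency set would, when fed the dependency set built from $G$, return a string whose number of distinct symbols is exactly $\chi(G)$; reading off this count solves the chromatic-number problem. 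Hence the reconstruction problem is \textbf{NP}-hard.

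I would phrase the reduction as a Karp/Turing reduction from \textsc{$k$-Colorability} (or \textsc{Chromatic Number}), so that the \textbf{NP}-hardness of the optimization/reconstruction task follows immediately. The only subtlety worth noting is that the dependency set here is a \emph{proper subset} of the kind of sets arising from genuine Manacher arrays, which is precisely the point: by contrast with the structured restriction graphs of \cref{sec:graph} (where the edge set is forced by palindromic structure and colorings are tractable), an \emph{arbitrary} set of length-1 negative dependencies encodes an arbitrary graph, so no such structural shortcut is available. This is what makes the general problem hard and explains why the paper's tractable results rely on restricting the dependencies to the non-arbitrary set induced by palindromes.

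The main obstacle is essentially cosmetic rather than mathematical: one must be careful to state what ``reconstructing with minimal alphabet'' means as a decision problem so that the reduction lands in the right complexity class. The cleanest route is to reduce to the decision version ``does there exist a string over an alphabet of size at most $k$ satisfying the dependencies?'', which is equivalent to ``is $G$ $k$-colorable?'', giving \textbf{NP}-completeness of the decision form and \textbf{NP}-hardness of finding the minimum. Beyond that, the argument requires no deep combinatorics, since the bijection between satisfying strings and proper colorings is immediate from the definition of a length-1 negative dependency.
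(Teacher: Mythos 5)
Your proposal is correct and follows essentially the same route as the paper: a reduction from graph coloring in which each edge $(u,v)$ becomes the length-1 negative dependency $S[u]\neq S[v]$, so that satisfying strings correspond exactly to proper colorings and the minimal alphabet size equals $\chi(G)$. Your additional care in phrasing the decision version is a reasonable refinement but does not change the substance of the argument.
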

\begin{proof}
    We reduce from graph vertex coloring: Let $G=(V,E)$ be an undirected graph, $n=|V|$. We initialize an empty set of dependencies $\mathsf A$. \\
    For every $(v_i,v_j)\in E$, we add the negative dependency $S[i] \neq S[j]$ into $\mathsf A$. We add no positive dependencies.

    Any string $S$ that satisfies all the dependencies in $\mathsf A$ is a valid vertex coloring for $G$, with the coloring function $\psi:V\to \Sigma$ defined as:
    \[
    \psi(v_i)=S[i]
    \]
    Two neighboring vertices $(v_i,v_j)$ never have the same color, as we required $S[i]\neq S[j]$. Conversely, any proper coloring $\psi:V \to \Sigma$ yields a string $S$ defined by $S[i] = \psi(v_i)$ that satisfies all constraints. Therefore, the minimum alphabet size of any satisfying string equals the chromatic number $\chi(G)$.

    Since computing (or deciding whether $\chi(G) \le k$) is NP-hard, minimizing the alphabet size subject to our constraints is NP-hard. 
\end{proof}

As an immediate corollary, reconstructing a string from an arbitrary (mixed positive/negative) dependency set using a minimal alphabet is also NP-hard.

\end{document}